\newtheorem{definition}{Definition}[section]
\newtheorem{lemma}{Lemma}
\newtheorem*{thm}{Theorem}
\newcommand{\salza}{\mbox{SALZA}}
\newcommand{\NSD}[1]{\ensuremath{\mbox{NSD}_{#1}}}
\newcommand{\alphabet}{\mathcal{A}}
\newcommand{\minlen}[2]{\ensuremath{\log^{#2}}_{|\alphabet_{#1}|}\left|#1\right|}
\newcommand{\allrefs}[1]{\ensuremath{\mathcal{L}_{#1}}}
\newcommand{\given}[4]{\ensuremath{{#1}_{\mbox{\small #2}}|^{\mbox{\tiny #3}}{#4}}}
\newcommand{\givenany}[2]{\ensuremath{{#1}\wr {#2}}}
\begin{document}
%
\title{\salza{}: Soft algorithmic complexity estimates for clustering and causality inference}
%
%
%

\author{Marion~Revolle,
        Fran\c cois~Cayre,
        and~Nicolas~Le~Bihan%
        \thanks{Authors email: \texttt{first.last@gipsa-lab.grenoble-inp.fr}.
          
          Marion Revolle is supported by a PhD grant from the French Ministry of Education and
          Research.

          Fran\c cois Cayre (corresponding author) is an assistant professor at Grenoble-INP.

          Nicolas Le Bihan is a research fellow at CNRS. 
        }
}

\markboth{Submitted version}%
         {Revolle \MakeLowercase{\emph{et al.}}: \salza: Soft algorithmic
           complexity estimates for clustering and causality inference}
%



\maketitle

\begin{abstract}
  A complete set of practical estimators for the conditional, simple and joint algorihmic complexities
  is presented, from which a semi-metric is derived. Also, new directed information estimators are
  proposed that are applied to causality inference on Directed Acyclic Graphs. The performances of these
  estimators are investigated and shown to compare well with respect to the state-of-the-art Normalized
  Compression Distance (NCD \cite{li:similarity:metric}).
\end{abstract}


%
\IEEEpeerreviewmaketitle

\section{Introduction}\label{sec:intro}
%
%
%
%

\IEEEPARstart{T}he use of compression algorithms to extract information from sequences or strings has
found applications in various fields \cite{li:kolmogorov} (and references therein), from biomedical and
EEG time series analysis \cite{zhang:eeg:kolmogorov}\cite{petrosian:eeg:kolmogorov} to languages
classification \cite{cilibrasi:clustering:compression} or species clustering
\cite{cilibrasi:clustering:compression}. Since the early papers of Lempel and Ziv
\cite{lempel:complexity}, the use of Kolmogorov complexity (the length of a shortest program able to
output the input string on a universal Turing computer \cite{li:kolmogorov}) in place of entropy for
non-probabilistic study of bytes strings, signals or digitized objects has led to the developement of the
Algorithmic Information Theory (AIT) \cite{li:kolmogorov}. Amongst the very interesting ideas and
concepts in AIT, the possiblity of defining distances between objects to measure their similitudes,
{\em i.e.} how much {\em information} they share, is one of the most used in practice. 

The approach proposed here was initially motivated by getting rid of most limitations induced be the
use of a practical compressor in computing an information distance \cite{cebrian:ncd:pitfalls}.
Reimplementing a coder from scratch allowed to give access to a much richer information than the raw
length of a compressed file (the only information available when using ``off-the-shelf'' compressors).
Doing so also allows to propose new estimates for simple, conditional and joint algorihmic complexities.
Such estimates can lead to the definition of a soft information semi-distance, as well as directed
information estimates.

The maximum information distance between two strings $x$ and $y$ is defined as
\cite{bennett:distance}: 
\begin{equation}\label{eq:information:distance}
  E_1(x,y) = \mbox{max}\{K(x|y),K(y|x)\},
\end{equation}
where $K(x|y)$ denotes the conditional Kolmogorov complexity of $x$ given $y$: the size of a shortest
program able to output $x$ knowing $y$. The Kolmogorov complexity $K$ is known not to be computable on
a universal Turing computer. Note that we will later extend the meaning of the conditional sign in
Sec.~\ref{ssec:conddef}. 

In order to compare objects of different sizes, the Normalized Information Distance (NID) has been
proposed \cite{li:similarity:metric}:
\begin{equation}
  \mbox{NID}(x,y)=\frac{\mbox{max}\{K(x|y), K(y|x)\}}{\mbox{max}\{K(x),K(y)\}},
\end{equation}
where $K(x)$ denotes the Kolmogorov complexity of $x$: the size of a shortest program able to output $x$ 
on a universal computer. Since $K$ is not computable, one usually has to resort to two main
approximations, leading to the Normalized Compression Distance (NCD) \cite{li:similarity:metric}, which
is a practical embodiment of the NID. 

Let $xy$ be the concatenation of $x$ and $y$. The first approximation concerns $K(x|y)$, which then
reads \cite{li:similarity:metric}: 
\begin{equation}\label{eq:concat-approx}
  K(x|y)\approx K(xy)-K(y).
\end{equation}

The second approximation consists in using the output of a real-world compressor to estimate $K$.
Let $C(x)$ be the compressed version of $x$ by a given compressor, the NCD then reads
\cite{li:similarity:metric}:
\begin{equation}\label{eq:ncd}
  \mbox{NCD}(x,y)=\frac{C(xy)-\mbox{min}\{C(x), C(y)\}}{\mbox{max}\{C(x),C(y)\}}. 
\end{equation}

It is remarkable that \emph{any} compressor can be used to estimate the NCD (actually, the results
are consistent even across \emph{families} of compressors). However, it is our goal to show that a
particular family of compressors, namely those based on the Lempel-Ziv algorithm \cite{ziv:universal},
are more amenable than others (namely, block-based) to mitigate the limitations induced by
Eq.~(\ref{eq:concat-approx}).

Using real-world compressors brings mainly two major limitations:
\begin{itemize}
\item The size of the block for block-based compressors \cite{cebrian:ncd:pitfalls};
\item For LZ77-based compressors:
  \begin{enumerate}
  \item The size of the sliding window. For example, the \texttt{DEFLATE} \cite{deutsch:rfc1951}
    compressor has a sliding window of 32KiB. Obviously, when the length of $y$ is greater than the
    size of the sliding window, then $C(xy)$ may not always use strings from $x$ to encode $y$. Note
    that increasing the size of the sliding window (\emph{e.g.}, \cite{pavlov:lzma}) does not guarantee
    that only strings from $x$ will be used to encode $y$;
  \item The size of the substrings that can be found in $y$ to encode $x$ is limited (258 bytes for
    \texttt{DEFLATE}).
  \end{enumerate}
\end{itemize}

The rest of this paper is organized as follows: After introducing four possible types of conditional
information in Sec~\ref{ssec:conddef}, we propose in Sec.~\ref{ssec:salza-cond} a normalized
conditional algorithmic complexity estimate. We also give the expressions for simple and joint
complexity estimates (Sec.~\ref{ssec:salza-joint}) and come up with a normalized
information semi-distance (Sec.~\ref{ssec:salza-semi-metric}). This is in contrast with previous
approaches were the normalization was performed afterwards. Then in Sec.~\ref{ssec:app:clustering} we
compare our semi-distance with the NCD. Finally, we define our algorithmic complexity estimates for
directed information \ref{ssec:dirinfo} and apply them to causality inference on directed acyclic graphs
in Sec.~\ref{ssec:app:causality}.

\section{\salza}\label{sec:salza}

The rationale for \salza{} is to come up with a practical implementation of algorithmic complexity
estimates that could work continuously over any sequence of symbols. This discards block-based
compression method like {\em e.g.}, \texttt{bzip2} which uses the Burrows-Wheeler block transform. 

As a practical implementation of \cite{ziv:universal} targeted towards data compression,
\texttt{DEFLATE} has to finely tune the intricacy 
between the string searching stage and the entropy coding stage. For example, \texttt{DEFLATE} will
generally use a socalled \emph{lazy match} string searching strategy, meaning it will not
necessarily find the longest strings, but stop the search at appropriate lengths. This has the combined
effect of speeding up the overall compression time, and also it produces a lengths distribution that
is more peaked (and hence more amenable to compression). However, it artificially creates references to
shorter strings that hence are not the most meaningful to explain the input data. 

Therefore, we explicitly depart from the pure data compression approach to compute our algorithmic
complexity estimates, but we keep Ziv-Merhav \cite{ziv:relative} and LZ77 \cite{ziv:universal} as the
core building blocks of our strategy. In the following, \emph{all} substrings are the longest
substrings. For the same reason, we do not use the entropy coding stage. 

\subsection{Conditional information}\label{ssec:conddef}

\salza{} builds on an implementation of \texttt{DEFLATE} \cite{deutsch:rfc1951} that has been further
extended with an unbounded buffer in order to implement the computation of a conditional Kolomogorov
complexity estimate. \salza{} will factorize a string $x$ using conditional information that may come
from different sets of strings. The string of possible references is denoted $\mathcal{R}$
and its alphabet is $\alphabet_{\mathcal{R}}$. We use the following to denote the four possible
reference strings, each time with respect to the current encoding position of the lookahead buffer:

\begin{enumerate}
\item \given{y}{}{}{x}: $\mathcal{R}$ is only the past of $x$:
  
  This models the usual LZ77 operating mode when $x=y$ (needed in Sec.~\ref{ssec:salza-joint}),
  and $\alphabet_\mathcal{R}=\alphabet_x$;
\item \given{y}{}{+}{x}: $\mathcal{R}$ is all of $x$:
  
  This models the usual Ziv-Merhav operating mode (needed in Sec.~\ref{ssec:salza-cond}),
  and $\alphabet_\mathcal{R}=\alphabet_x$;
\item \given{y}{-}{}{x}: $\mathcal{R}$ is the past of both $x$ and $y$:
  
  This will be needed later on in Sec.~\ref{ssec:dirinfo},
  and $\alphabet_\mathcal{R}=\alphabet_x\cup\alphabet_y$;
\item \given{y}{-}{+}{x}: $\mathcal{R}$ is the past of $y$ and all of $x$:
  
  This will be needed later on in Sec.~\ref{ssec:salza-joint},
  and $\alphabet_\mathcal{R}=\alphabet_x\cup\alphabet_y$;
\end{enumerate}

These types of conditional informations are depicted in Fig.~(\ref{fig:conddef}). When the conditional
information is left unspecified, we will use $\givenany{x}{y}$ to stand for either type of conditional
information. 

\begin{figure}[!ht]
  \centering
  \def\svgwidth{\columnwidth}
  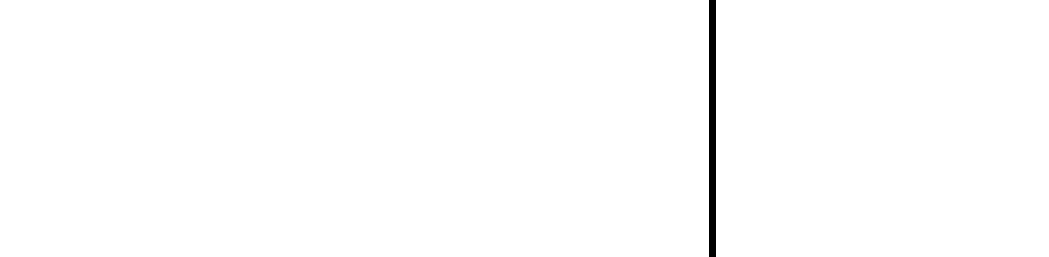
  \caption{Strings $\mathcal{R}$ for the conditional information (from which references are allowed), by
    darkening shades of gray: \given{y}{-}{+}{x}, \given{y}{-}{}{x}, \given{y}{}{+}{x} and
    \given{y}{}{}{x}. The thick vertical bar represents the position of the current lookahead buffer
    when encoding $y$.}
  \label{fig:conddef}
\end{figure}

Using any conditional information of the above, \salza{} will always produce \emph{symbols} of the form
$\left(l,v\right)$, which can be either:
\begin{itemize}
\item \textit{references}: $l>1$ is the length\footnote{\label{note:min3bytes}For implementation
  reasons, the actual minimum length of a reference is 3 bytes, like it is suggested in \texttt{DEFLATE}
  \cite{deutsch:rfc1951}. This is not mandatory but it follows from the data structure suggested for the
  dictionary (a hash table made of an array of linked lists). We believe this has a negligible impact on
  the overall results. Improving on this has been done in \cite{pavlov:lzma} and could be done here
  using (much) more involved and costly data structures like, \textit{e.g.}, Douglas Baskins' Judy
  arrays \cite{silverstein:judy}.} of a substring in the dictionary, and, although it is not used in this
  paper, $v$ is the offset in the dictionary at which bytes should start to be copied;
\item \textit{literals}: $l=1$ and $v$ is the literal in $x$ that should be copied to the output buffer.
\end{itemize}

It is important to notice that for any output symbol of \salza{}, one has that:
\begin{equation}\label{eq:lengths:all:geq1}
  \forall l, l \geq 1.
\end{equation}

Eventually, \salza{} will factorize a string $x$ in $n$ symbols by finding always the longest strings
(and $n$ will be our estimate of the \emph{relative complexity} \cite{ziv:relative} when \salza{} is
run in Ziv-Merhav mode):
$$x\rightsquigarrow\left(l_1,v_1\right)\ldots\left(l_n,v_n\right).$$

A specific feature of \salza{} is that it gives access to a much richer information than when using
an ``out of the box'' compressor. In particular, the lengths of the symbols produced during the factorization of a string allows to define a complexity measure that exhibits a
stronger discriminative power than those available in the state-of-the-art litterature. 

%
%

\subsection{The \salza{} conditional complexity estimate}\label{ssec:salza-cond}

Let $\allrefs{\givenany{x}{y}}=\left\{l\right\}$ be the reference lengths produced by the
relative coding of $x$ given $y$ by \salza{} using any conditional information. Also, $|.|$ denotes
the length of a string or the cardinal of an alphabet. Note also that $x$ is defined over the alphabet
$\alphabet_x$ and $y$ is defined over $\alphabet_y$. When strings $x$ and $y$ are defined over the same
alphabet, it will be simply denoted $\alphabet$. 

When used in Ziv-Merhav mode ($\allrefs{\givenany{x}{y}}=\allrefs{\given{x}{}{+}{y}}$), \salza{}
will operate as follows:
\begin{itemize}
\item if $x$ and $y$ do not share the same alphabet, only literals will be produced;
\item if $x=y$, only one symbol will be produced, namely $\left(|x|,1\right)$.
\end{itemize}

\begin{definition} Set value.

  Let $f:\mathbb{N}^\star\rightarrow\mathbb{R}$ be a mapping and let $\mathcal{T}$ be a finite set of
  non-zero natural numbers. The image of $\mathcal{T}$ by $f$ is defined as:
  \begin{equation*}
    |\mathcal{T}|_f = \sum_{s\in\mathcal{T}}f(s).
  \end{equation*}
\end{definition}

The notation $|\mathcal{T}|=|\mathcal{T}|_{\mathbbm{1}_\mathcal{T}}$ will also be used to denote the
cardinal of $\mathcal{T}$.

We now look for a normalized, conditional complexity estimate. Focusing on a conditional measure is the
first step to proposing the semi-distance later on in Sec.~\ref{ssec:salza-semi-metric} or constructing
directed informations estimates in \ref{ssec:dirinfo}. The very normalization, as in
\cite{li:similarity:metric}, is needed to compare objects of different sizes. In contrast with
\cite{li:similarity:metric}, our strategy is not to perform the normalization afterwards, but to use
directly Eq.~(\ref{eq:information:distance}) with an already normalized estimate. 

\begin{definition} Admissible function.
  
  A function $f:\mathbb{N}^\star\rightarrow\left[0,1\right]$ is said to be admissible iff it is
  monotonically increasing.
\end{definition}

In the following definition, the admissible function allows us to finely modulate the information that
is taken into account (see Sec.~\ref{ssec:salza:study}). 

\begin{definition}\salza{} conditional complexity estimate.
  
  Given an admissible function $f$, and two non-empty strings $x\in\alphabet_x$ and $y\in\alphabet_y$,
  the \salza{} conditional complexity estimate of $x$ given $y$,  denoted $S_f(\givenany{x}{y})$, is
  defined as:
  \begin{equation}\label{eq:salza-cond}
    S_f(\givenany{x}{y}) = \underbrace{\left(1-\frac{\sum_{\allrefs{\givenany{x}{y}}}l f(l)-
      (|\allrefs{\givenany{x}{y}}|_f-1)}{\left|x\right|}\right)}_{\mathcal{S}}
    \underbrace{\frac{|\allrefs{\givenany{x}{y}}|-1}{|x|}}_{\mathcal{Z}}.
  \end{equation}
  For simplicity, the dependency on $x$ $y$ and $f$ in the terms of the fatorization are omitted.
  Therefore, the form $S_f(\givenany{x}{y})={\mathcal S}{\mathcal Z}$ will be used in the sequel. 
\end{definition}

In Eq.~(\ref{eq:salza-cond}), the two-terms factorization elements of $S_f(\givenany{x}{y})$ can be
interpreted the following way:
\begin{enumerate}
\item ${\mathcal S}$ is based on the length ratio of $x$ that is explained by $y$ -- we will show
  that it acts as a ``spreading'' factor that emphasizes differences between both strings so that the
  final value allows for a sharper numerical estimate (see Sec.~\ref{ssec:salza:study});
\item ${\mathcal Z}$ is the normalization of the \salza{} approximation of the relative complexity
  \cite{ziv:relative}. This normalization is simply obtained by dividing by the maximum number of
  symbols that can be produced, namely $|x|$.
\end{enumerate}

\begin{lemma}\label{prop:normalized}$0\leq S_f(\givenany{x}{y})< 1.$
\end{lemma}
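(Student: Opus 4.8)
The plan is to treat the two factors $\mathcal{S}$ and $\mathcal{Z}$ separately, show each lies in $[0,1)$, and conclude that their product does too. Throughout I write $\mathcal{L}$ for $\allrefs{\givenany{x}{y}}$, and I invoke three structural facts about the factorization: every length satisfies $l\ge 1$ by Eq.~(\ref{eq:lengths:all:geq1}); the produced symbols cover $x$, so that $\sum_{l\in\mathcal{L}}l\le|x|$; and, since $x$ is non-empty and each symbol spans at least one position, the number of symbols obeys $1\le|\mathcal{L}|\le|x|$. From admissibility I need only the codomain bound $0\le f(l)\le 1$ (monotonicity is not used here).

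First I would dispatch $\mathcal{Z}=\frac{|\mathcal{L}|-1}{|x|}$. Since $1\le|\mathcal{L}|\le|x|$, the numerator ranges over $[0,|x|-1]$, giving at once $0\le\mathcal{Z}\le\frac{|x|-1}{|x|}<1$.

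The substance is in bounding $\mathcal{S}$. I would first rewrite its numerator using $|\mathcal{L}|_f=\sum_{l\in\mathcal{L}}f(l)$: because $\sum_{l}lf(l)-|\mathcal{L}|_f=\sum_{l}(l-1)f(l)$, one has
\begin{equation*}
\mathcal{S}=1-\frac{\sum_{l\in\mathcal{L}}(l-1)f(l)+1}{|x|}.
\end{equation*}
For the upper bound, each term $(l-1)f(l)$ is non-negative (as $l\ge 1$ and $f\ge 0$), so the numerator is at least $1$ and $\mathcal{S}\le 1-\frac{1}{|x|}<1$. For the lower bound I would use $f(l)\le 1$ to get $\sum_l(l-1)f(l)\le\sum_l(l-1)=\sum_l l-|\mathcal{L}|\le|x|-|\mathcal{L}|$, whence
\begin{equation*}
\mathcal{S}\ge 1-\frac{|x|-|\mathcal{L}|+1}{|x|}=\frac{|\mathcal{L}|-1}{|x|}=\mathcal{Z}\ge 0.
\end{equation*}
Thus $0\le\mathcal{S}<1$ as well.

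Combining the two ranges finishes the argument: both factors lie in $[0,1)$, so the product is non-negative, and $S_f=\mathcal{S}\mathcal{Z}\le\mathcal{Z}<1$ since $\mathcal{S}\le 1$ and $\mathcal{Z}\ge 0$, giving $0\le S_f(\givenany{x}{y})<1$. The only genuinely delicate point is the lower bound $\mathcal{S}\ge 0$: it is not visible term-by-term and hinges on the coverage inequality $\sum_{l\in\mathcal{L}}l\le|x|$ together with $f\le 1$, which jointly force $\mathcal{S}\ge\mathcal{Z}$. Everything else is routine algebra; the main thing to pin down carefully is precisely which structural properties of the \salza{} factorization are being used, namely lengths $\ge 1$, coverage of $x$, and at least one but no more than $|x|$ symbols.
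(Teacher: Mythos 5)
Your proof is correct and takes essentially the same route as the paper's: the same two-factor decomposition into $\mathcal{S}$ and $\mathcal{Z}$, the same rewriting of the numerator as $\sum_{l}(l-1)f(l)+1$, and the same structural facts ($l\ge 1$, $0\le f\le 1$, coverage of $x$, and $1\le|\mathcal{L}|\le|x|$). The only difference is a minor sharpening — you bound $\mathcal{S}\ge\mathcal{Z}$ where the paper settles for $\mathcal{S}\ge 0$, and you state the strict bound $\mathcal{S}\le 1-1/|x|$ explicitly — but the argument is the same.
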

\begin{proof}
  The proof consists in demonstrating that both ${\mathcal S}$ and ${\mathcal Z}$ are normalized and follow the same trend with respect to
  similarity.
  \begin{enumerate}
  \item First, we show that the term ${\mathcal S}$ is normalized.
    
    Remark that:
    $$\sum_{\allrefs{\givenany{x}{y}}}l f(l)-|\allrefs{\givenany{x}{y}}|_f = \sum_{\allrefs{\givenany{x}{y}}}f(l)(l-1).$$
    
    Because $f$ is upper-bounded by 1 and remembering that $l\geq 1$, see
    Eq.~(\ref{eq:lengths:all:geq1}), then:
    $$\sum_{\allrefs{\givenany{x}{y}}}f(l)(l-1)\leq \sum_{\allrefs{\givenany{x}{y}}}(l-1),$$
    and:
    $$\sum_{\allrefs{\givenany{x}{y}}}(l-1)\leq |x|-1.$$
    Note that the equality holds when $x$ is a substring of $\mathcal{R}$ (\textit{e.g.}, when
    computing $S_f(\given{x}{}{+}{x})$).
    
    Hence:
    $$\left(1-\frac{\sum_{\allrefs{\givenany{x}{y}}}l f(l)-(|\allrefs{\givenany{x}{y}}|_f-1)}{\left|x\right|}\right)\geq 0,$$
    with zero being reached when $x$ is a substring of $\mathcal{R}$. 
    
    By Eq.~(\ref{eq:lengths:all:geq1}) and positivity of $f$, one has:
    $$\sum_{\allrefs{\givenany{x}{y}}}f(l)(l-1)\geq 0.$$
    Therefore:
    $$\left(1-\frac{\sum_{\allrefs{\givenany{x}{y}}}l f(l)-(|\allrefs{\givenany{x}{y}}|_f-1)}{\left|x\right|}\right)\leq 1,$$
    with equality being reached when $x$ and $\mathcal{R}$ are maximally dissimilar (\textit{e.g.},
    $\alphabet_x\cap\alphabet_y=\emptyset$ when computing $S_f(\given{x}{}{+}{y})$).
  \item Second, we show that ${\mathcal Z}$ is also normalized.
    
    The \salza{} factorization will produce at least one symbol (exactly one when $x$ is a substring of
    $\mathcal{R}$), therefore:
    $$\frac{|\allrefs{\givenany{x}{y}}|-1}{|x|}\geq 0.$$
    Note that ${\mathcal Z}$ only vanishes when $x=y$.

    In the worst case (when $x$ and $\mathcal{R}$ are maximally dissimilar, \textit{e.g.}
    $\alphabet_x\cap\alphabet_\mathcal{R}=\emptyset$, same as above), $|\allrefs{\givenany{x}{y}}|=|x|$.
    Therefore:
    $$\frac{|\allrefs{\givenany{x}{y}}|-1}{|x|} = \frac{|x|-1}{|x|}<1.$$
  \end{enumerate}
  From the above considerations, it is easy to see that, when $x$ and $y$ are maximally dissimilar, one
  has:
  $$\lim_{|x|\rightarrow\infty}S_f(\givenany{x}{y}) = 1.$$
\end{proof}

Before making use of $S_f(\givenany{x}{y})$ to measure complexity, we investigate some of its features.

\subsection{Study of the \salza{} conditional complexity estimate}\label{ssec:salza:study}

We start by choosing an admissible function $f$. The use of $f$ in Eq.~(\ref{eq:salza-cond}) allows to
modulate the choice of the references taken into account, and how they contribute to the construction
of the estimate of the complexity.

A possible choice for $f$ is a threshold function in order to filter out references that are not
meaningful. Such references are first defined as explained below.

\begin{definition} Meaningful references \cite{lempel:complexity}.

  A \salza{} reference $\left(l,v\right)$ is said to be meaningful with respect to $\mathcal{R}$ iff:
  \begin{equation}
    l > l^0_\mathcal{R} = \minlen{\mathcal{R}}{}.
  \end{equation}
\end{definition}

Then, an admissible threshold function is defined as follows:

\begin{definition} Threshold function.

  The admissible threshold function for string $\mathcal{R}$, denoted $f_\mathcal{R}^t$, is defined as:
  \begin{align*}
    f_\mathcal{R}^t(l)=\left\{
    \begin{aligned}
      1 & & \mbox{if} & & l > l^0_\mathcal{R} \\
      0 & & \mbox{otherwise} &
    \end{aligned}
    \right..
  \end{align*}
\end{definition}

However, as expected, this choice produces discontinuities in the first term of
$S_{f_\mathcal{R}^t}(\givenany{x}{y})$, {\em i.e.} ${\mathcal S}_\mathcal{R}^t$, see
Fig.~(\ref{fig:admissible:threshold}). The effect of using this threshold is especially harmful
when $l_\mathcal{R}^0$ is close to $\lfloor l_\mathcal{R}^0\rfloor$: in that case, references which
are just smaller than $l_\mathcal{R}^0$ are discarded, even though they may carry some information. 

An easy way to circumvent this issue is to replace the threshold function with a continuous function in
order to produce soft estimates. Among all possible choices, we arbitrarily favor $C^\infty$ functions
and make use of a sigmoid.

\begin{definition} Sigmoid function.

  The admissible sigmoid function for string $\mathcal{R}$, denoted $f_\mathcal{R}^s$, is defined as:
  \begin{equation*}
    f_\mathcal{R}^s(l) = \frac{1}{1+e^{-l+l_\mathcal{R}^0}}.
  \end{equation*}
\end{definition}

\begin{figure}[!ht]
\centering
\includegraphics[width=.5\linewidth]
        {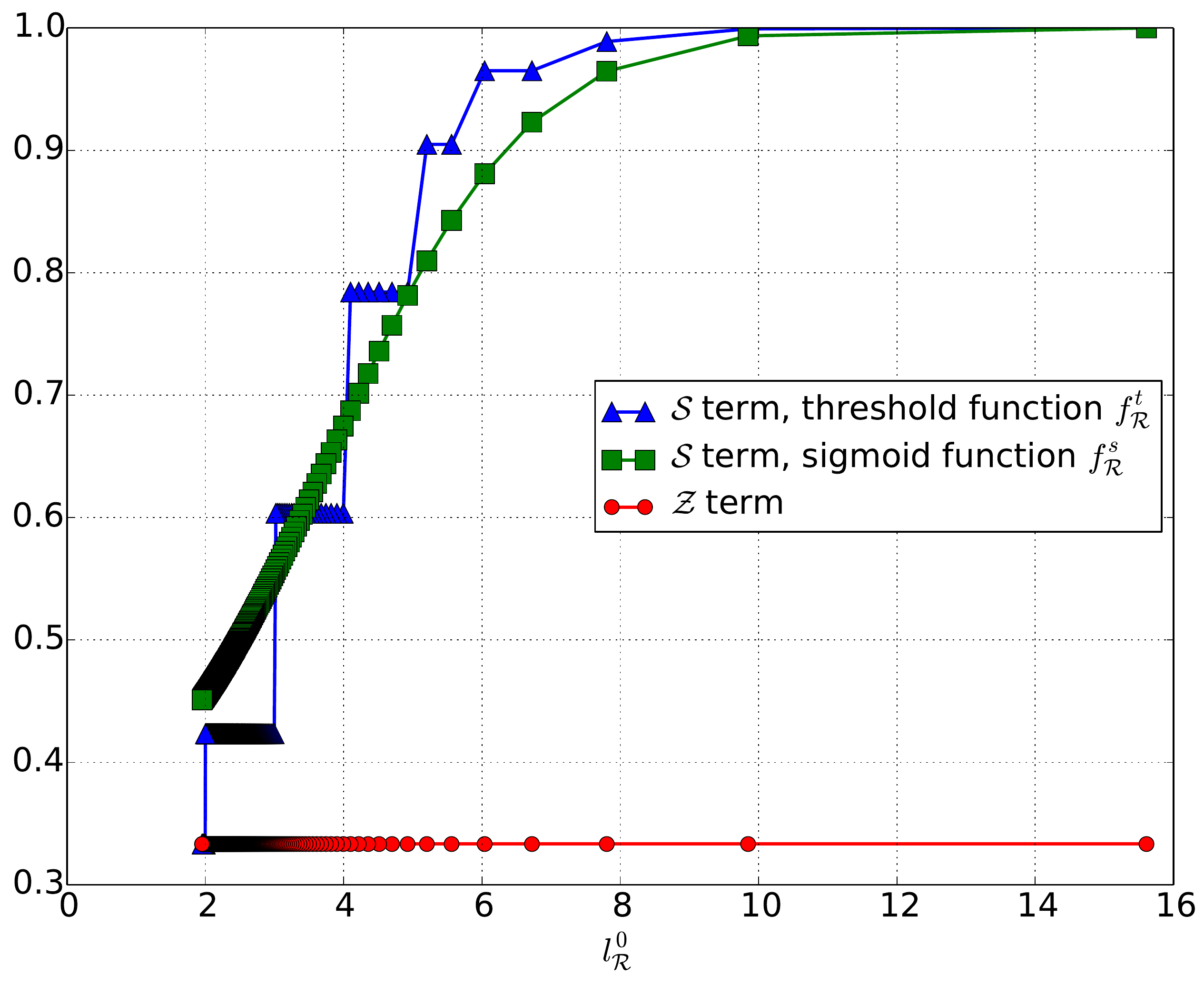}
        \caption{Profiles of $S_{f_\mathcal{R}^t}(\givenany{x}{y})$ (triangles) vs.
          $S_{f_\mathcal{R}^s}(\givenany{x}{y})$ (squares) as a function of $l_\mathcal{R}^0$. Using a
          continuous admissible function like a sigmoid allows to circumvent the effect of taking hard
          decisions when thresholding. This is especially important for small references and helps better
          discriminate dissimilar strings. }
\label{fig:admissible:threshold}
\end{figure}

In the following, we study the evolution of the various scores with respect to $\mu$ the mean length
of the symbols produced during the factorization, $|x|$ and $l_\mathcal{R}^0=\minlen{\mathcal{R}}{}$.
This study is hardly tractable using closed-form expressions and we resort to numerical simulations
(this is inspired by the probabilistic treatment found in \cite{ziv:relative}). From our experiments,
we chose the Poisson distribution as the most suitable discrete distribution for the symbol lengths.

In order to study $S_f(\givenany{x}{y})$ with respect to $\mu$, we have fixed $l_\mathcal{R}^0$ and
generated Poisson-distributed lengths such that $\sum_{\allrefs{\givenany{x}{y}}}l=|x|$. The results are
depicted in Fig.~(\ref{fig:boost:small:values}): when $\mu$ takes high values (\textit{i.e.}, $x$ and
$\mathcal{R}$ contain many identical long strings) all three terms get the same. However, when $\mu$ is
smaller (\textit{i.e.}, $x$ and $\mathcal{R}$ only contain short identical strings), then the spreading
term $\mathcal{S}$ of \salza{} in Eq.~(\ref{eq:salza-cond}) will take much higher values than the term
$\mathcal{Z}$ alone. Hence, the discriminative power of \salza{} for dissimilar strings is expected to be
much stronger. 

In order to study $S_f(\givenany{x}{y})$ with respect to $|x|$, we have worked with both $\mu$ and
$l_\mathcal{R}^0$ fixed. The results are depicted in Fig.~(\ref{fig:salza:long:strings}): they confirm
that the \salza{} conditional complexity estimate is invariant with respect to $x$. 

In Fig.~(\ref{fig:admissible:threshold}), it is noticeable that the term $\mathcal{Z}$ is constant with
respect to $l_\mathcal{R}^0$. This is actually an issue. Suppose $\mathcal{R}$ is a long string defined
over a small alphabet, then it is very likely that $\mathcal{R}$ will contain almost all possible
strings of small size. Thus, if $x$ is defined over the same small alphabet, there is a high chance that
it will be explained using only small strings found in $\mathcal{R}$. Therefore, it seems important to
take into account both the values of $|\mathcal{R}|$ and $|\alphabet_\mathcal{R}|$ and this can be done
using the minimal length of meaningful references $l_\mathcal{R}^0$. 

Now, as opposed to the $\mathcal{Z}$ term, the ``spreading'' term of Eq.~(\ref{eq:salza-cond}),
{\em i.e.} ${\mathcal Z}$, will adapt to all settings. Using a sigmoid allows for a softer estimate than
when using a threshold. This, in fact, contributes to the stronger discriminative power of \salza{}
when common substrings are close to $l_\mathcal{R}^0$, see Fig.~(\ref{fig:boost:small:values}). 

\begin{figure}[!ht]
\centering
\includegraphics[width=.5\linewidth]
        {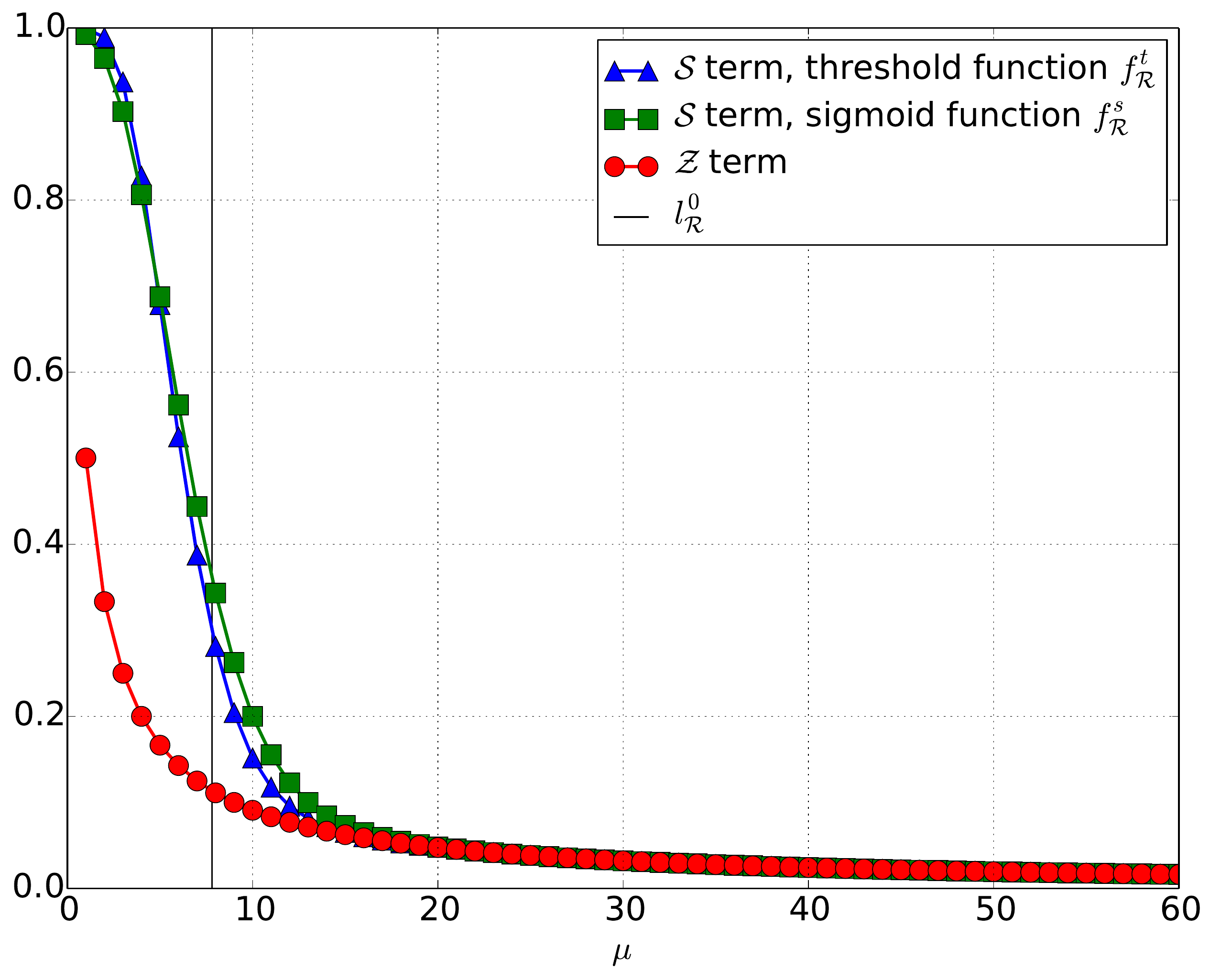}
        \caption{Discriminative power of \salza{} as a function of the Poisson distribution mean 
          $\mu$. The minimal value of the meaningful references $l_\mathcal{R}^0$ is represented as a
          vertical bar. }
\label{fig:boost:small:values}
\end{figure}

\begin{figure}[!ht]
\centering
\includegraphics[width=.5\linewidth]
        {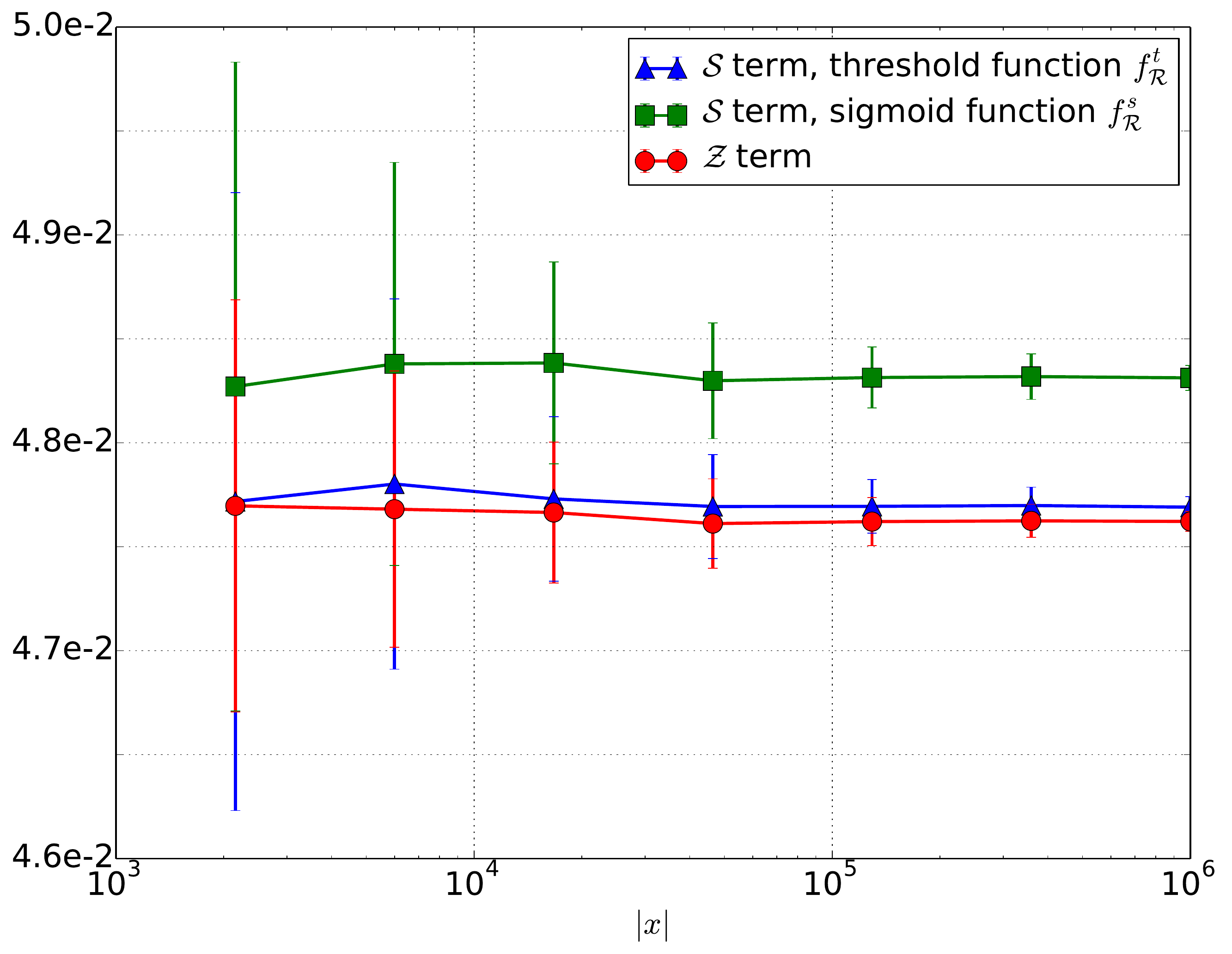}
        \caption{$S_{f_\mathcal{R}^t}(\givenany{x}{y})$ (triangles) and
          $S_{f_\mathcal{R}^s}(\givenany{x}{y})$ (squares) as a function of $|x|$. The discriminative
          power of \salza{} is nearly constant for arbitrarily long strings -- this is in contrast with
          the NCD (see Sec.~\ref{ssec:app:clustering}) and it is due to the unbounded buffer in our
          implementation. Note that the plot starts when $|x|$ is half the size of the \texttt{DEFLATE}
          buffer. Each point has been generated by averaging 200 realizations. }
\label{fig:salza:long:strings}
\end{figure}

Overall, the \salza{} conditional complexity estimate will better discriminate dissimilar strings than
the $\mathcal{Z}$ term alone, and it will do so equally well on arbitrarily long strings. As for the
choice of the admissible function, we observe a little advantage in favor of the sigmoid. Therefore, in
the rest of this paper, unless otherwise explicitly stated, the sigmoid is the default admissible
function for all computations. 

\subsection{\salza{} complexity and joint complexity}\label{ssec:salza-joint}

Let $\allrefs{x}=\allrefs{\given{x}{}{}{x}}$ be the set of lengths produced during a regular
\texttt{DEFLATE} factorization (\textit{i.e.}, a close version of that in \cite{ziv:universal}). This
allows us to propose an estimate for the joint complexity of strings $x$ and $y$. 

\begin{definition} \salza{} complexity.
  
  Given an admissible funtion $f$ and a non-empty string $x\in\alphabet_x$, the \salza{} complexity of
  $x$, denoted $S_f(x)$, is defined as:
  \begin{align*}
    S_f(x) & = S_f(\given{x}{}{}{x}) \\
    & = \left(1-\frac{\sum_{\allrefs{x}}l f(l)-(|\allrefs{x}|_f-1)}{\left|x\right|}\right) \frac{|\allrefs{x}|-1}{|x|}.
  \end{align*}
\end{definition}

The joint Kolmogorov complexity can be understood as the minimal program length able to encode both
$x$ and $y$, as well as a means to separate the two \cite{li:kolmogorov}. Hence, there is no need to
restrict the references only to $x$, and we should allow references to the past of $y$ as well. In order
to mimic the relationship $K(x,x)=K(x)$, we choose a length ratio as the way to separate both strings. 

\begin{definition} \salza{} joint complexity.
  
  Given an admissible function $f$, and two non-empty strings $x\in\alphabet_x$ and $y\in\alphabet_y$,
  the \salza{} joint complexity of $x$ and $y$, denoted $S_f(x,y)$, is defined as:
  \begin{equation*}
    S_f(x,y) = S_f(\given{y}{-}{+}{x})+S_f(x)+\log_{|\alphabet_x|}\left(\frac{|x|}{|y|}\right).
  \end{equation*}
\end{definition}

Note that $S_f(x,x) = S_f(x)$ because $S_f(\given{x}{-}{+}{x})=0$. 

In order to validate our approach, we measure the following absolute error:
$$\epsilon=|S_f(x,y)-S_f(y,x)|.$$

For each experiment, we have used various translations for the Universal Declaration of Human Rights
(UDHR) and samples of mammals mitochondrial DNA (available at \cite{cilibrasi:complearn}). The results
are reported in Tab.~\ref{tab:joint:symmetry} and show a maximum average absolute error below 2.37\%,
while the maximum absolute error we have encountered is around 7.2\% (which happened in the weirdest
setting: comparing completely unrelated data, namely a DNA sample and a human text -- when the data
come from the same area, the results are much better on average). 

\begin{table}[!ht]
  \caption{Characterization of \salza{} joint complexity with respect to symmetry.}
  \label{tab:joint:symmetry}
  \centering
  \begin{tabular}{|l|l||l|l|l|l|}\hline
    x    & y    & $\mathbb{E}\left[\epsilon\right]$ & $\mbox{Var}\left[\epsilon\right]$ & $\mbox{min}(\epsilon)$ & $\mbox{max}(\epsilon)$ \\
    \hline
    UDHR & UDHR & 1.43e-3 & 1.38e-6 & 5e-6    & 7.96e-3 \\
    DNA  & DNA  & 1.23e-3 & 8.11e-7 & 6e-6    & 4.98e-3 \\
    UDHR & DNA  & 6.84e-2 & 2.49e-6 & 6.28e-2 & 7.17e-2 \\
    \hline
  \end{tabular}
\end{table}

\section{\NSD{} and directed information}\label{sec:dist:dirinfo}

In this section, we use the previous definitions to devise both an algorithmic semi-distance and
directed information definitions that are key to the applications in Sec.~\ref{sec:salza:app}.

\subsection{The normalized \salza{} semi-distance}\label{ssec:salza-semi-metric}

Using \salza{} in its Ziv-Merhav mode practically eliminates the two limitations mentioned in
Sec.~\ref{sec:intro}: references are taken only from $\mathcal{R}=y$, and the unbounded buffer allows to
reach any arbitrary point in $y$.

Since $S_f(\given{x}{}{+}{y})$ is normalized, we can now refer directly to
Eq.~(\ref{eq:information:distance}) to propose a semi-distance. 

\begin{definition}\label{def:nssm} \NSD{f}.
  
  Given an admissible function $f$, and two non-empty strings $x\in\alphabet_x$ and $y\in\alphabet_y$,
  the normalized \salza{} semi-distance, denoted \NSD{f}, is defined as:
  \begin{equation*}
    \NSD{f}(x,y) = \mbox{max}\left\{S_f(\given{x}{}{+}{y}), S_f(\given{y}{}{+}{x})\right\}.
  \end{equation*}
\end{definition}

Note that \NSD{f} stands for Normalized \salza{} semi-Distance using $f$. By default, when $f=f^s_y$,
it is simply denoted by \NSD{}.

\begin{thm}
  \NSD{f} is a normalized semi-distance.
\end{thm}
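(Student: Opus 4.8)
The plan is to verify, in turn, the three properties that define a \emph{normalized semi-distance}: normalization ($0\leq \NSD{f}(x,y) < 1$), symmetry, and vanishing on the diagonal ($\NSD{f}(x,x)=0$); the triangle inequality is deliberately not part of the claim, which is exactly what the prefix ``semi'' records. Two of the three are immediate. For normalization, recall that by definition $\NSD{f}(x,y)=\max\{S_f(\given{x}{}{+}{y}),\,S_f(\given{y}{}{+}{x})\}$, and Lemma~\ref{prop:normalized} guarantees that each conditional estimate lies in $[0,1)$; since the maximum of two numbers in $[0,1)$ again lies in $[0,1)$, we obtain $0\leq \NSD{f}(x,y) < 1$ with no further computation. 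Symmetry is equally direct: writing $a=S_f(\given{x}{}{+}{y})$ and $b=S_f(\given{y}{}{+}{x})$, commutativity of the maximum gives $\NSD{f}(x,y)=\max\{a,b\}=\max\{b,a\}=\NSD{f}(y,x)$. The point worth stressing is that the underlying conditional estimate is itself asymmetric, and it is precisely the outer maximum (mirroring $E_1$ in Eq.~(\ref{eq:information:distance})) that symmetrizes it.

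The only property needing a genuine argument is that the distance vanishes on the diagonal. Here I would invoke the behaviour of \salza{} in Ziv-Merhav mode when $x=y$, as recorded in Sec.~\ref{ssec:salza-cond}: the factorization then emits exactly one symbol, $(|x|,1)$, so that $|\allrefs{\given{x}{}{+}{x}}|=1$ and the normalization factor $\mathcal{Z}=(|\allrefs{\given{x}{}{+}{x}}|-1)/|x|$ is zero. Since $S_f=\mathcal{S}\,\mathcal{Z}$, the product vanishes regardless of the value of $\mathcal{S}$; this is also consistent with the equality case of Lemma~\ref{prop:normalized}, which sends the spreading factor to its lower extreme because $x$ is trivially a substring of $\mathcal{R}=x$. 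Hence $S_f(\given{x}{}{+}{x})=0$, and therefore $\NSD{f}(x,x)=\max\{0,0\}=0$.

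I expect the main difficulty to be terminological rather than mathematical: the statement must be read as asserting only these three axioms, and I would make explicit up front that the triangle inequality is not being established. Because the estimate is an approximation assembled from an asymmetrized relative-complexity term, there is no structural reason to expect subadditivity, which is exactly why ``semi-distance'' is the correct and sufficient claim. Once this scope is fixed, the three reductions above are routine given Lemma~\ref{prop:normalized} and the single-symbol behaviour of the factorization, and nothing beyond them is required.
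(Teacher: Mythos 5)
There is a genuine gap: you verify only the weak diagonal property $\NSD{f}(x,x)=0$, whereas the notion of semi-distance used in the paper --- and established in its proof --- includes the full identity of indiscernibles, $\NSD{f}(x,y)=0 \iff x=y$. The converse direction is the one part of the theorem that requires a real argument, and you have omitted it. The paper's reasoning is: if $\NSD{f}(x,y)=0$, then both $S_f(\given{x}{}{+}{y})$ and $S_f(\given{y}{}{+}{x})$ vanish; each is a product $\mathcal{S}\mathcal{Z}$, so at least one factor of each must be zero, and inspecting the equality cases in the proof of Lemma~\ref{prop:normalized} shows that either factor can vanish only when the encoded string is a substring of the reference string $\mathcal{R}$. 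Applied to both orderings, $x$ is a substring of $y$ and $y$ is a substring of $x$, hence $x=y$. Without this step you have shown only that $\NSD{f}$ is a pseudo-distance, which is strictly weaker than what the theorem asserts.

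A secondary, more stylistic divergence: you declare the triangle inequality out of scope, whereas the paper devotes the second half of its proof to an explicit counter-example (strings assembled from blocks over pairwise disjoint alphabets, with one block reversed, for which $T=\NSD{f}(x,z)+\NSD{f}(z,y)-\NSD{f}(x,y)<0$), thereby showing that ``semi'' is not merely cautious but necessary. Omitting that construction is defensible for the literal statement; the missing converse of the identity of indiscernibles is not.
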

\begin{proof}
  By Lemma \ref{prop:normalized} and Def.~\ref{def:nssm}, \NSD{f} is normalized. We now show
  that \NSD{f} is a semi-distance:

  $\NSD{f}(x,y)=\NSD{f}(y,x)$ (symmetry): This is immediate by Def.~\ref{def:nssm}.
    
  $\NSD{f}(x,y)\geq 0$ (non-negativity): This is immediate by Lemma \ref{prop:normalized} and
  Def.~\ref{def:nssm}.
  
  $\NSD{f}(x,y) = 0 \iff x=y$ (identity of indiscernibles):
    \begin{itemize}
    \item $x=y \implies \NSD{f}(x,y) = 0$

      When $x=y$, then \salza{} will produce only one reference to the entire string $y$ (resp. $x$)
      when computing $S_f(\given{x}{}{+}{y})$ (resp. $S_f(\given{y}{}{+}{x})$). Therefore:
      $$x=y \implies S_f(\given{x}{}{+}{y})=S_f(\given{y}{}{+}{x})=0.$$
    \item $\NSD{f}(x,y) = 0\implies x=y$

      Since $S_f(\given{x}{}{+}{y})\geq 0$, then by Def.~\ref{def:nssm}:
      $$\NSD{f}(x,y)=0 \implies S_f(\given{x}{}{+}{y})=S_f(\given{y}{}{+}{x})=0.$$
      Because it is a product, setting $S_f(\given{x}{}{+}{y})=0$ means either (or both) of its terms is
      zero. Looking at the demonstration of Lemma~\ref{prop:normalized}, one sees that setting either
      member to zero means $x$ is a substring of $y$ (when computing $S(\given{x}{}{+}{y})$) and $y$
      is a substring of $x$ (when computing $S(\given{y}{}{+}{x})$). Hence, $x=y$. This completes the
      proof for the identity of indiscernibles.
    \end{itemize}
    
    \noindent Showing that the triangle inequality does not hold only requires a counter-example. 
    Consider a string $x$. Let $\bar{x}$ be $x$ with its literals in reversed order and $x^n$ be $n$
    times the concatenation of $x$. Let $a\in\alphabet_a$, $b\in\alphabet_b$ and $c\in\alphabet_c$ and
    let all strings $a$, $b$
    and $c$ be each composed of the concatenation of all literals -- all three strings have the same
    length $|\alphabet_a|=|\alphabet_b|=|\alphabet_c|=M$. Assume further that
    $\alphabet_a\cap\alphabet_b=\alphabet_b\cap\alphabet_c=\alphabet_a\cap\alphabet_c=\emptyset$.
    Now, by concatenation, define $x=a^nb\bar{c}$, $y=bc^n\bar{a}$ and $z=acb^n$. Let also
    $f = f_y^t$.
    In this case:
    \begin{itemize}
    \item $\NSD{f}(x,y)=\frac{(n+1)^2}{(n+2)^2}$, since both computations of $S_f(\given{x}{}{+}{y})$
      and $S_f(\given{y}{}{+}{x})$ will generate $M(n+1)$ literals and only one reference (of lengh $M$);
    \item $\NSD{f}(x,z)=\frac{(n+M)^2}{(n+2)^2M^2}$, since both computations of
      $S_f(\given{x}{}{+}{z})$ and $S_f(\given{z}{}{+}{x})$ will generate $M$ literals and $n+1$
      references (of length $M$);
      \item $\NSD{f}(z,y)=\frac{(n+M)^2}{(n+2)^2M^2}$ (following the same reasoning as above).
    \end{itemize}
    Let now $T=\NSD{f}(x,z)+\NSD{f}(z,y)-\NSD{f}(x,y)$. One has:
    $$T=\frac{2(n+M)^2-(n+1)^2M^2}{(n+2)^2M^2}.$$
    For example, when $M=60$ and $n=10$, one gets $T=-0.54<0$, which violates the triangle inequality. 
\end{proof}

The counter-example above basically shows that coders based on \cite{ziv:universal} or
\cite{ziv:relative} will not handle correctly anti-causal phenomenons, although one may of course
design an extended lookup procedure able to handle such cases in addition to the usual ``copy from the
past'' used here. For this reason, we believe tuning the admissible function $f$ will not help in
restoring the triangle inequality. 

In addition, below are a couple of facts about of the \mbox{NCD}:
\begin{itemize}
\item when the \mbox{NCD} is computed using real-world compressors, the triangle inequality may also
  be violated in practice (one may devise a counter-example using the same reasoning as above that
  will also violate the triangle inequality for NCD/\texttt{gzip});
\item in all experiments in this paper, we did not witness any violation of the triangle inequality
  when using \salza{};
\item the other properties hold strictly, which may not be the case for the practical versions of the
  NCD (\emph{e.g.}, for large $x$, NCD/\texttt{gzip}$(x,x)$ will not be zero);
\item the computational cost of \salza{} is different compared to the NCD: when using \texttt{gzip},
  it is bounded by the internal buffer of 32KiB. However, since we use an unbounded buffer, the
  \salza{} running complexity is $O(\mbox{max}(|x|, |y|))$. Therefore, only running two coders instead
  of three in Eq.~(\ref{eq:ncd}) does not necessarily implies faster running times for \salza{}. 
\end{itemize}

\subsection{Directed algorithmic information}\label{ssec:dirinfo}

Causality inference relies on the assessment of a matrix of directed informations from which a causality
graph will be produced. Due to the very nature of causality, some fundamental restrictions on the
underlying graph structure apply. In particular, most authors focus on directed acyclic graphs (DAG)
\cite{pearl:causality} and we will hereafter follow this line. Therefore, we start by defining estimates
of directed algorithmic information. 

We would like to stress that causality has received several interpretations and it is, among other
considerations, also dependent on the type of data at hand. We will consider two types of data here:
time series \cite{granger:causal} (in which application area a version based on classical information
theory has been proposed \cite{amblard:granger:review}), and data that is not a function of time
\cite{pearl:causality}. To some extent, this relates to the difference between online and offline
applications. Therefore, we need to distinguish between the two. 

Let $X=\left\{x_i\right\}$ be a set of strings, and let us denote $X\backslash Y$ the set from which the
set of strings $Y$ was removed ($Y\subset X$). When $Y=\left\{y\right\}$, we also write $X\backslash y$. 

We formulate the causal directed algorithmic information as follows:

\begin{definition} Causal directed algorithmic information.  
\begin{equation}
  \forall i\neq j,\;C(x_i\rightarrow x_j) = K( \given{x_j}{-}{}{X\backslash \left\{x_i,x_j\right\}} ) -
  K(\given{x_j}{-}{}{X}\backslash x_j).
\end{equation}
\end{definition}

$C(x_i\rightarrow x_j)$ is the amount of algorithmic information flowing from $x_i$ to $x_j$ when
oberving data online in real time (think of the $x_i$ as {\em e.g.}, outputs of ECG probes). 

In practice, we compute:
\begin{equation}\label{eq:causal:directed:info}
  C_{S_f}(x_i\rightarrow x_j) = S_f(\given{x_j}{-}{}{X\backslash \left\{x_i, x_j\right\}} ) - S_f(\given{x_j}{-}{}{X}\backslash x_j).
\end{equation}

Similarly, for offline applications, when all the data is available beforehand (think {\em e.g.}, of
text excerpts), we define the socalled full directed algorithmic information as: 

\begin{definition} Full directed algorithmic information.  
\begin{equation}
  \forall i\neq j,\;F(x_i\rightarrow x_j) = K( \given{x_j}{-}{+}{X\backslash \left\{x_i,x_j\right\}} ) -
  K(\given{x_j}{-}{+}{X\backslash x_j}).
\end{equation}
\end{definition}

In practice, we compute:
\begin{equation}\label{eq:full:directed:info}
  F_{S_f}(x_i\rightarrow x_j) = S_f(\given{x_j}{-}{+}{X\backslash \left\{x_i,x_j\right\}} ) - S_f(\given{x_j}{-}{+}{X\backslash x_j}).
\end{equation}

Note that we are only considering the amount of information flowing from one string to another. Hence,
we are fundamentally fitting in the Markovian framework. And since we remove the influence of all other
strings, we are actually measuring the influence of the sole innovation contained in one such string onto
another. This will be illustrated in Sec~\ref{ssec:app:causality}. 


\section{Applications}\label{sec:salza:app}

\subsection{Application to clustering}\label{ssec:app:clustering}

In Eq.~(\ref{eq:salza-cond}), we use the richness of the information that is produced when running a
Ziv-Merhav coder \cite{ziv:relative}. Such a symbol-length information is not available when using a
real-world compressor out of the box. In this subsection, we show that our strategy allows for a more
sensitive assessment of the complexity compared to the \texttt{CompLearn} tool
\cite{cilibrasi:complearn}. Throughout this paper, we make use of the Neighbor-Joining (NJ) method to
obtain the phylogenic trees \cite{saitou:neighbor:joining} every time we need to compare against the
state-of-the-art, although the UPGMA method \cite{sokal:upgma} would produce essentially the same
results (though it will be used appropriately in Sec.~\ref{sssec:app:toussaint}). 

\subsubsection{Small, real data}

We have used the data available at \cite{cilibrasi:complearn} to compare the phylogenic trees produced
by \salza{} and those produced with \texttt{CompLearn} \cite{cilibrasi:complearn} using \texttt{gzip}
for the \mbox{NCD} compressor\footnote{Although we have developed our own \texttt{gzip} compliant
  encoder, including the Huffman entropy coding stage, we have used the Gailly \& Adler implementation
  for fairness of the simulations (available at \href{http://www.gzip.org}{\texttt{gzip.org}}).}. These
datasets are made of mitochondrial DNA samples and several translations of the Universal Declaration of
Human Rights in various languages (more properly, writing systems). 

\begin{figure*}%
  \centering
  \begin{minipage}[b]{.5\linewidth}
    \includegraphics[width=\textwidth,height=11cm]
        {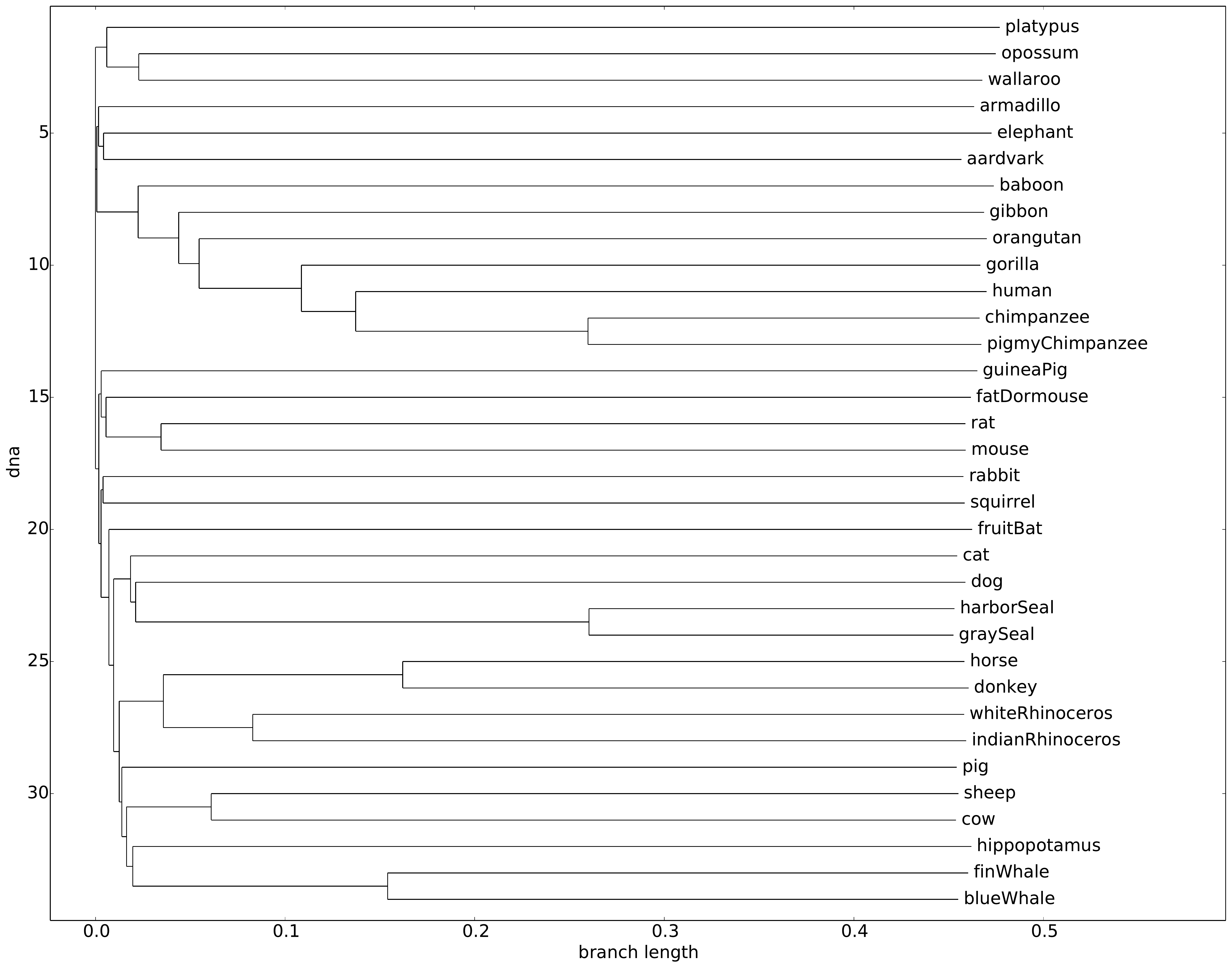}
  \end{minipage}%
  \hfil
  \begin{minipage}[b]{.5\linewidth}
    \includegraphics[width=\textwidth,height=11cm]
        {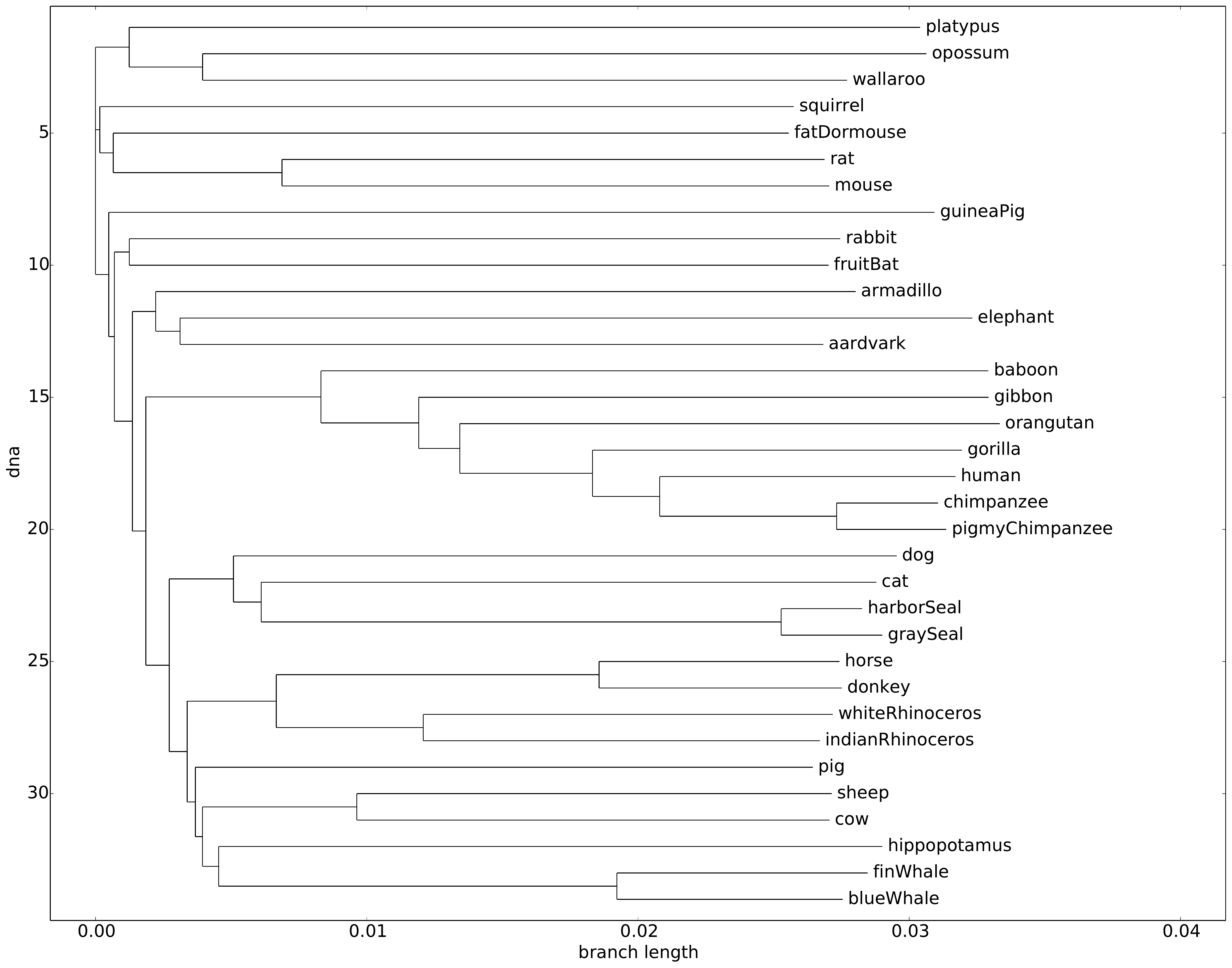}
  \end{minipage}
  \caption{Mitochondrial DNA clustering with NCD/\texttt{gzip} (left) vs. \NSD{} (right).}
  \label{fig:dna}
\end{figure*}

\begin{figure*}%
  \centering
  \begin{minipage}[b]{.5\linewidth}
    \includegraphics[width=\textwidth,height=11cm]
        {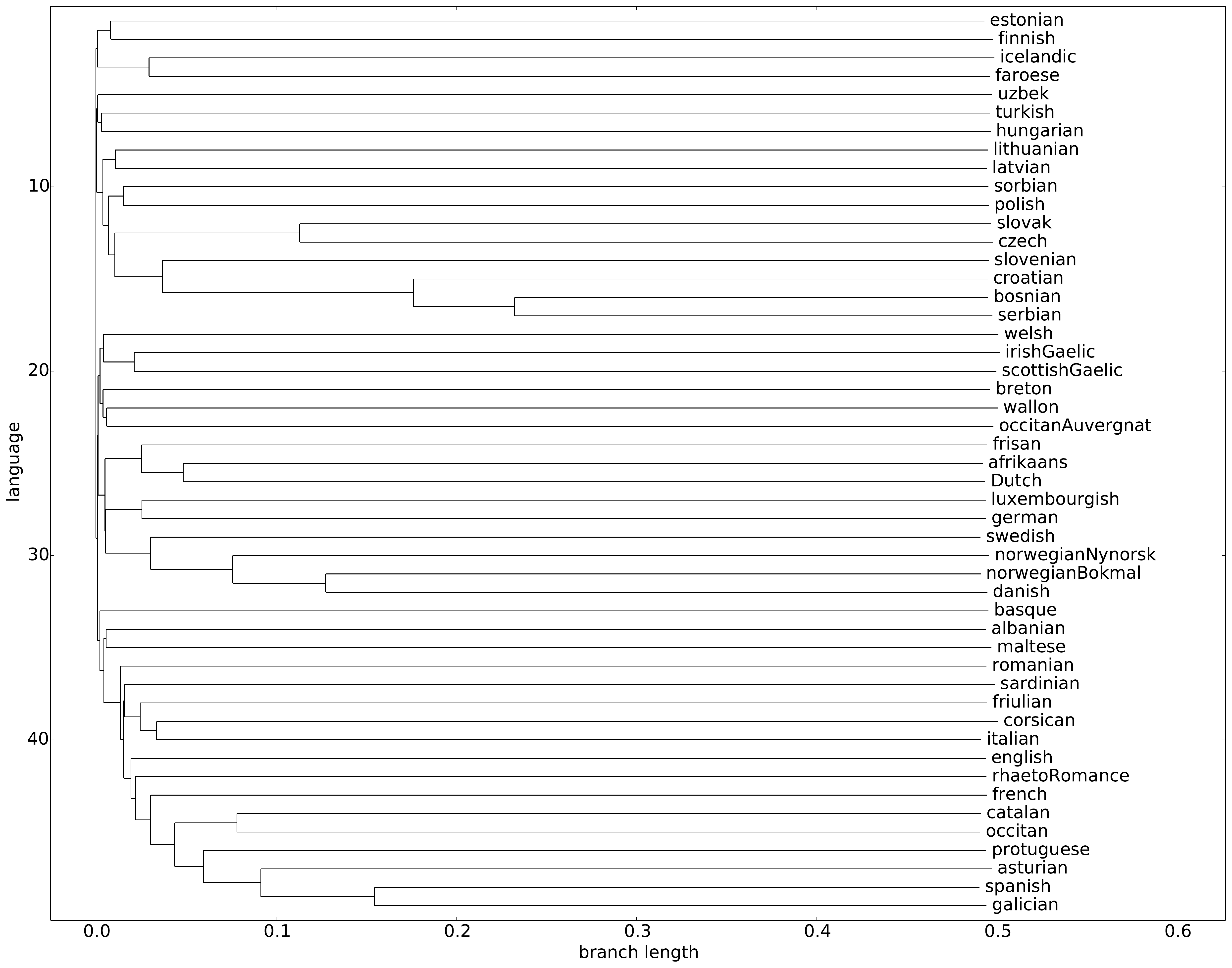}
  \end{minipage}%
  \hfil
  \begin{minipage}[b]{.5\linewidth}
    \includegraphics[width=\textwidth,height=11cm]
        {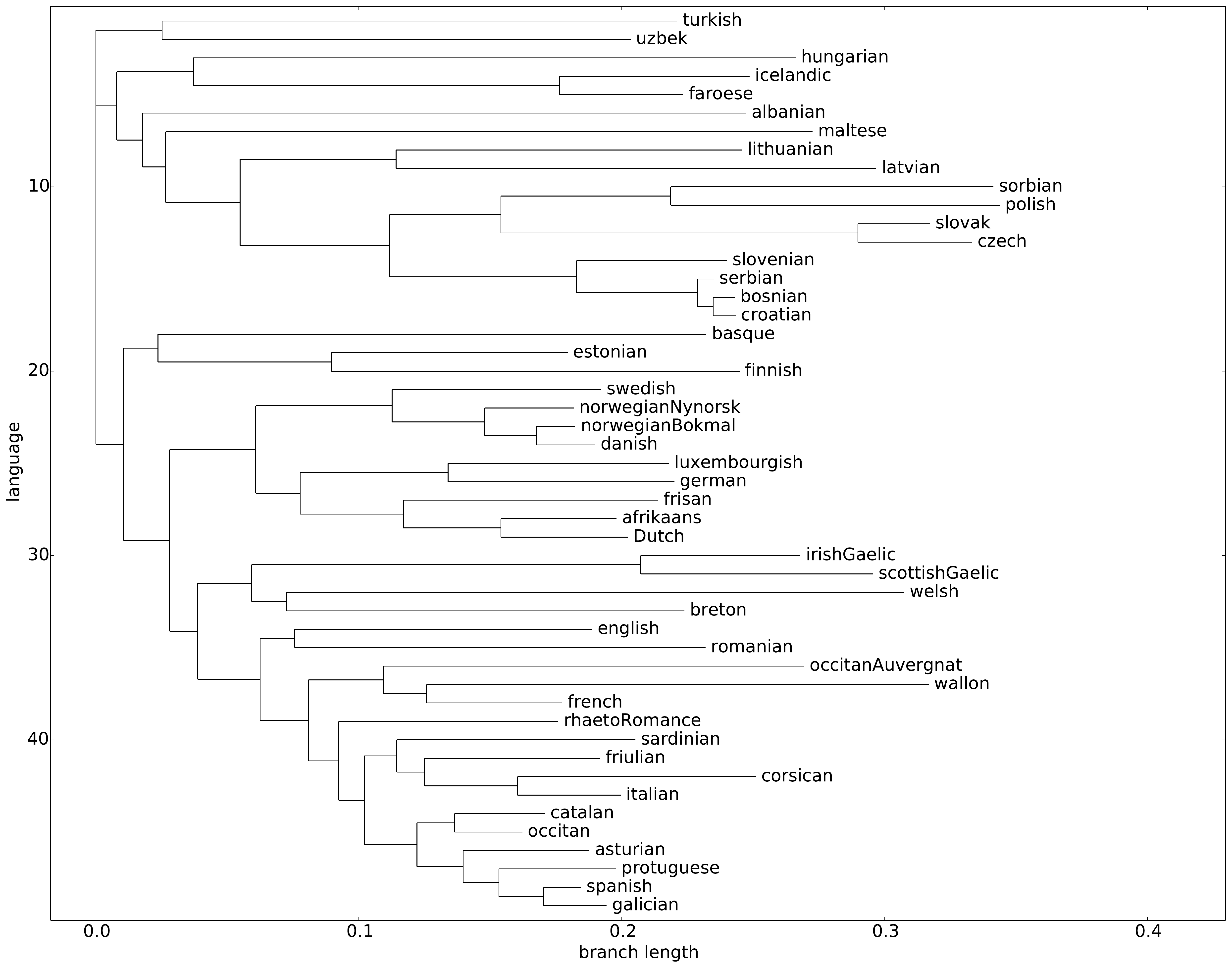}
  \end{minipage}
  \caption{Clustering of writing systems (UDHR) using NCD/\texttt{gzip} (left) vs. \NSD{} (right).}
  \label{fig:ddhc}
\end{figure*}

The effect of the first term in Eq.~(\ref{eq:salza-cond}) is clear: the trees exhibit sharper clustering
patterns than when using NCD/\texttt{gzip}, see Fig.~(\ref{fig:dna}--\ref{fig:ddhc}). One has obviously
a real advantage in taking into account the lengths produced by the \salza{} factorization. This
advantage seems even clearer when the data is more structured as it is the case with human writing
systems. 

It is remarkable that regarding the Basque language, NCD/\texttt{gzip} fails to group it with
Finno-Ugric languages (Finnish and Estonian), which is one of the earliest hypotheses
(XIX\textsuperscript{th} century) on the question of the origin of the Basque language
(\cite{morvan:basque}, Ch. 2). This hypothesis has been mostly rejected in the meantime. However,
the stronger discriminative power of \salza{} shows that this hypothesis could have seem plausible
at some point in time. Because the branches of the phylogenic trees can be put upside-down without
changing the meaning of the interpretation, an acute reader will see that the Basque language is also
close to Altaic languages (Turkish, Uzbek), which count among other hypotheses that have been formulated
in the past \cite{morvan:basque}.

\subsubsection{Synthetic data}

As seen above, the phylogenic trees show sharper differences between clusters with \salza{}. We now use
several realizations of various first-order Markovian processes to compare with the \mbox{NCD} in the
same setting. In the following plots, the data is labeled as \texttt{alphaX\_mY\_cZ}, where
$\mbox{\texttt{X}}=|\alphabet|$, \texttt{Y} is the identifier of the Markov transition probabilities
matrix and \texttt{Z} is the identifier of the realization. Each realization string contains 15K
literals (actually, bytes), so that we can fairly compare with NCD/\texttt{gzip}. 

\begin{figure*}%
  \centering
  \begin{minipage}[b]{.5\linewidth}
    \includegraphics[width=\textwidth,height=11cm]
        {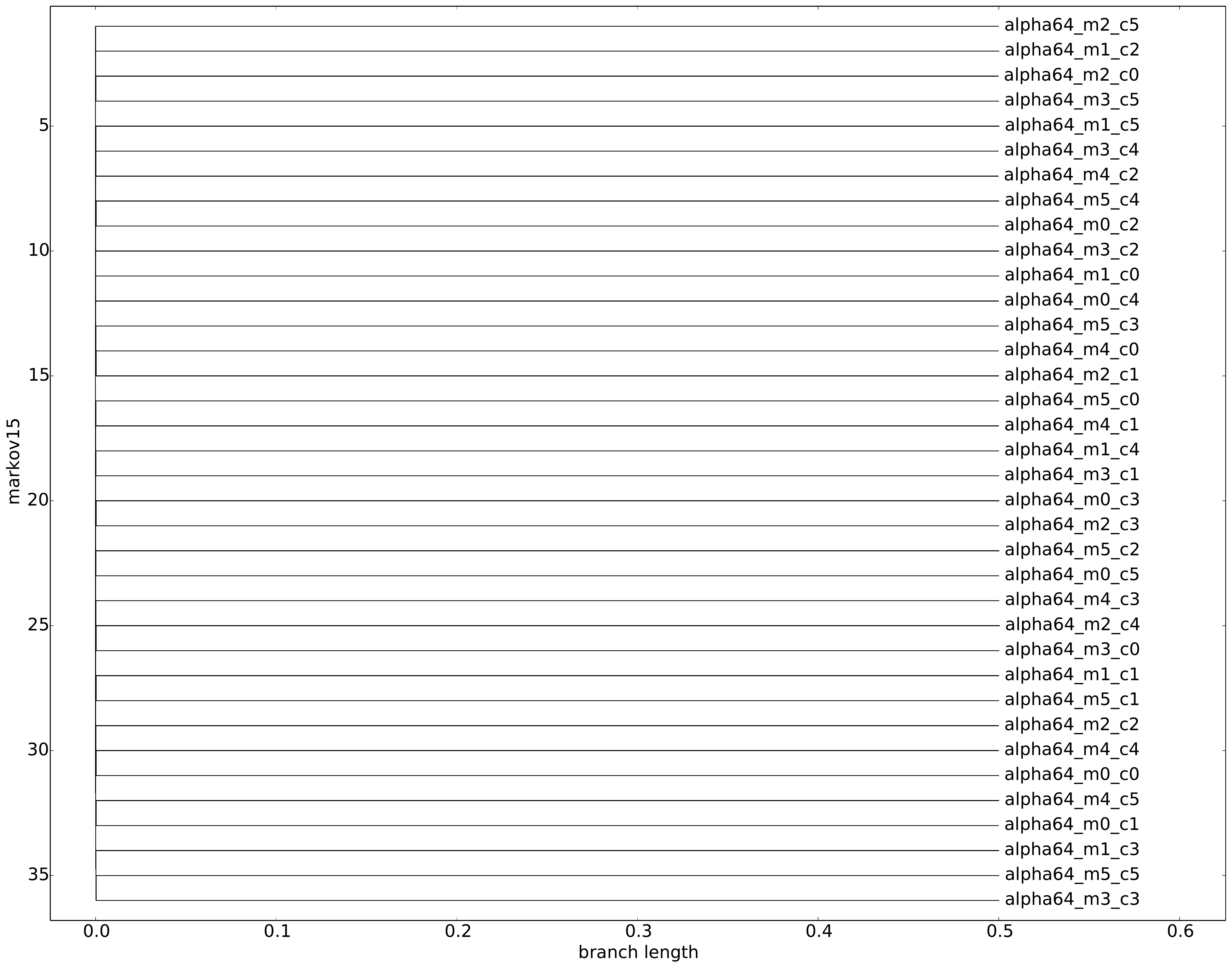}
  \end{minipage}%
  \hfil
  \begin{minipage}[b]{.5\linewidth}
    \includegraphics[width=\textwidth,height=11cm]
        {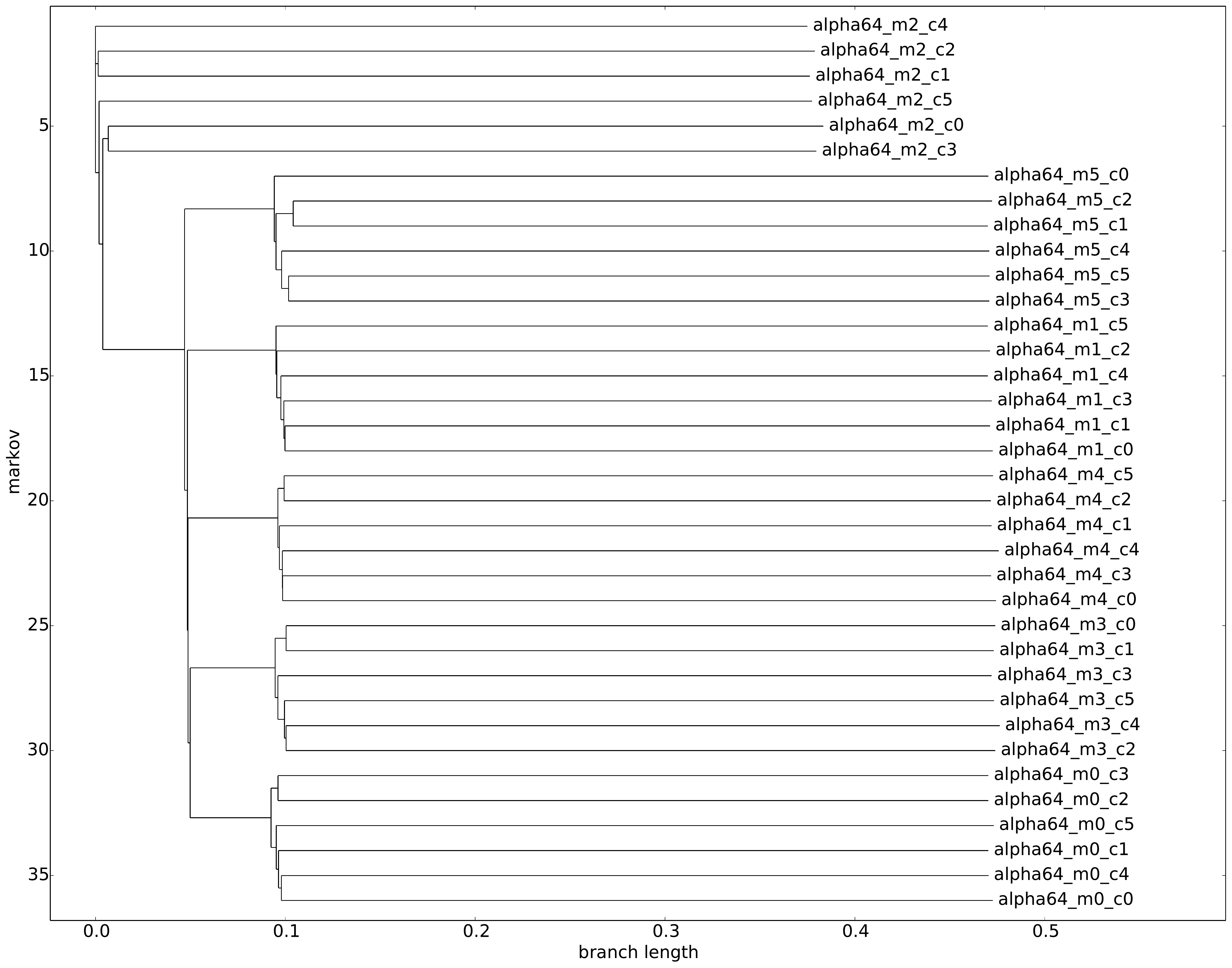}
  \end{minipage}
  \caption{Clustering of Markovian processes using NCD/\texttt{gzip} (left) vs. \NSD{} (right)
    -- $|\alphabet|=64$. The NCD/\texttt{gzip} mostly fails when \salza{} succeeds in correctly
    clustering the Markovian processes with respect to their generative model. }
    \label{fig:markov64}
\end{figure*}

Here, NCD/\texttt{gzip} is unable to correctly classify the realizations with respect to their Markov
transition probabilities matrices -- contrary to \NSD{}, see Fig.~(\ref{fig:markov64}). This further
highlights the stronger discriminative power of \NSD{} over NCD/\texttt{gzip}. The same results have
been obtained using $|\alphabet|=4$ or $|\alphabet|=256$. 

\subsubsection{Larger data set}

In order to produce meaningful results, the data at \cite{cilibrasi:complearn} is limited to chunks of
16KiB (so that the concatenation of two samples fits the \texttt{gzip} internal buffer length). Because
we are now able to cluster data of arbitrary length, we use to this end a corpus of freely available
French classical books in their entirety, see Fig.~(\ref{fig:corpus}). Most books get classified by
author and the branch lengths of the clusters are intuitively linked to the perception of a native French
reader, thus confirming the cognitive nature of information distances.

Due to the internal, 32KiB limited length of \texttt{gzip} buffer \cite{cebrian:ncd:pitfalls}, the
NCD/\texttt{gzip} fails here as badly as in Fig.~(\ref{fig:markov64}), hence we do not depicts these
results to focus only on \salza{}. The Neighbor-Joining method is known to sometimes produce negative
branch lengths \cite{saitou:neighbor:joining}, this happens here for Montaigne. The unbounded buffer
of \salza{} allows to produce quite a meaningful classification of the books. Note that the tale genre
gets classified as such (Perrault, Aulnoy), although the proper style of Voltaire makes his works a
cluster of their own. \salza{} roughly classifies books by author and by chronological order, thus
reflecting the evolution of the French language itself.

\begin{figure*}%
  \centering
  \begin{minipage}[htb]{.5\linewidth}
    \centering
    \begin{tabular}{|l|l|l|}\hline
      Author & Work & Size \\
      \hline
      Montesquieu      & De l'Esprit des Lois       & 2.1M \\
      Dumas            & Vingt Ans Apr\`es          & 1.7M \\
      Rousseau         & Les Confessions            & 1.6M \\
      Dumas            & Les Trois Mousquetaires    & 1.4M \\
      Montaigne        & Essais, II                 & 1.2M \\
      Stendhal         & La Chartreuse de Parme     & 1.1M \\
      Zola             & Germinal                   & 1.1M \\
      Stendhal         & Le Rouge et le Noir        & 1.1M \\
      Zola             & L'Assomoir                 & 976K \\
      Montaigne        & Essais, III                & 886K \\
      La Bruy\`ere     & Les Caract\`eres           & 847K \\
      Zola             & La B\^ete Humaine          & 787K \\
      Montaigne        & Essais, I                  & 764K \\
      Sade             & Justine                    & 655K \\
      Maupassant       & Bel-Ami                    & 644K \\
      Aulnoy           & Contes, I                  & 634K \\
      La Rochefoucauld & Maximes                    & 616K \\
      Rousseau         & L'\'Emile (I--III)         & 586K \\
      Pascal           & Les Provinciales           & 573K \\
      Montesquieu      & Lettres Persanes           & 541K \\
      Rousseau         & L'\'Emile (IV)             & 486K \\
      Aulnoy           & Contes, 2                  & 457K \\
      Maupassant       & Une Vie                    & 451K \\
      Maupassant       & Fort Comme la Mort         & 439K \\
      Montesquieu      & Lettres Famili\`eres       & 390K \\
      Maupassant       & Mont Oriol                 & 348K \\
      Pascal           & Pens\'ees                  & 341K \\
      Montesquieu      & Consid\'erations           & 337K \\
      Rabelais         & Gargantua                  & 298K \\
      Sade             & Les Infortunes de la Vertu & 298K \\
      Descartes        & \'Elisabeth                & 290K \\
      Voltaire         & Roman et Contes            & 283K \\
      Rabelais         & Pantagruel                 & 273K \\
      Rousseau         & R\^everies Prom. Solitaire & 244K \\
      Descartes        & Les Passions de l'\^Ame    & 214K \\
      Perrault         & Contes                     & 207K \\
      Voltaire         & Candide                    & 194K \\
      Descartes        & Discours de la M\'ethode   & 126K \\
      Zola             & La D\'eb\^acle             & 123K \\
      \hline
    \end{tabular}
  \end{minipage}%
  \hfil
  \begin{minipage}[htb]{.5\linewidth}
    
    \includegraphics[width=\linewidth,height=11cm]
                    {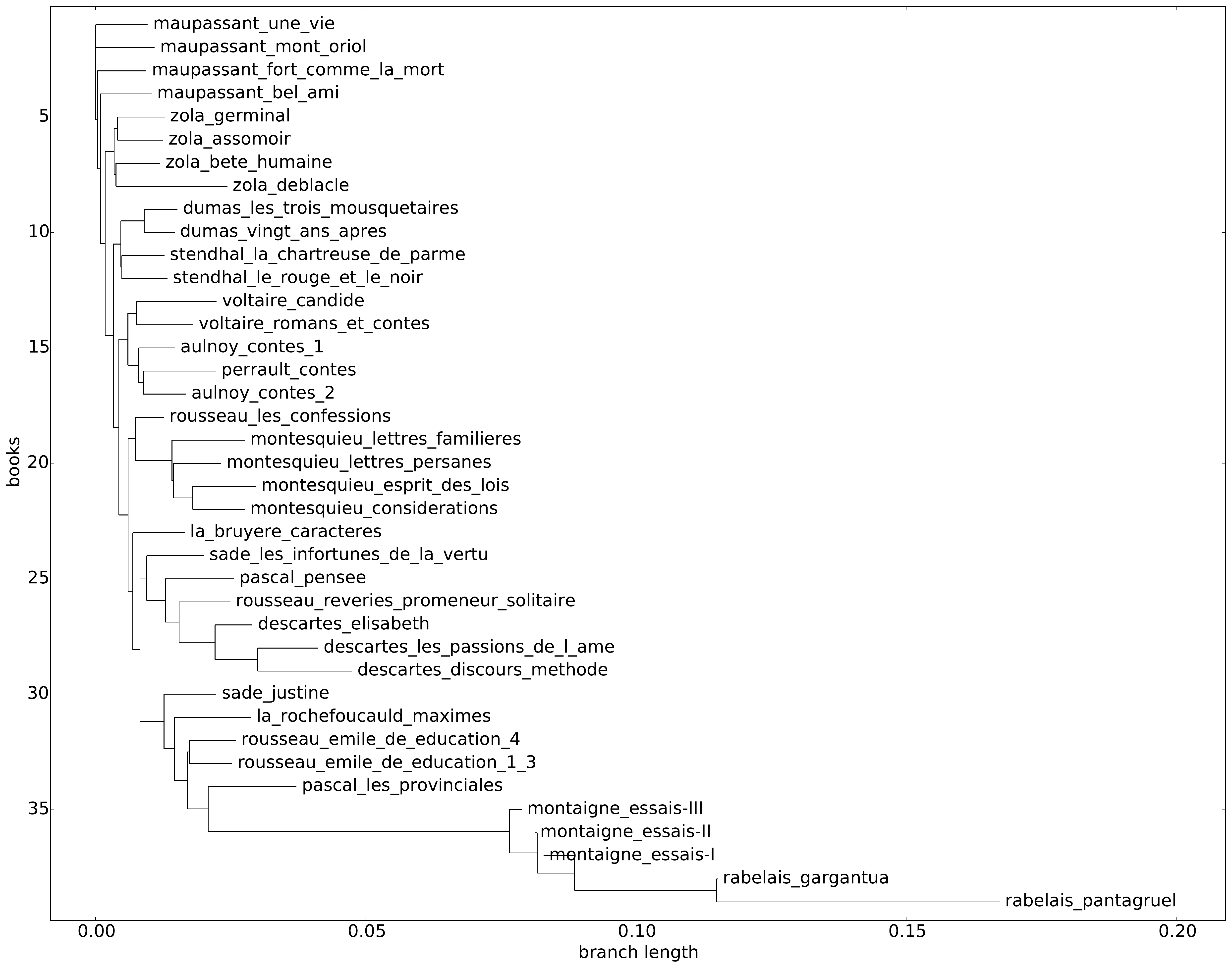}
                    \caption{Clustering of French classic books ranging from XVI\textsuperscript{th} to
                      XIX\textsuperscript{th} century (left) using the \NSD{}.
                    }
                    \label{fig:corpus}
  \end{minipage}
\end{figure*}

\subsection{Causality inference}\label{ssec:app:causality}

Directed information is the tool of choice to perform causality inference. In this section, we
illustrate the performance of the two different kinds of directed information we have defined above
(Sec.~\ref{ssec:dirinfo}). In the first experiment, we concentrate on synthetic data to show how
\salza{} performs in a causal setting: we will recover the structure of several DAGs of random processes
given one of their realizations. This is to simulate the behaviour of systems modeled by dependent time
series. Here, we will naturally use Eq.~(\ref{eq:causal:directed:info}).

In the second experiment, this time on real data, we will use \salza{} to recover the order with which
several versions of a text have been produced by an author. This will illustrate both the usefulness of
full directed algorithmic information in Eq.~(\ref{eq:full:directed:info}) and the accuracy we obtain,
even on a rather small data set. 

\subsubsection{Synthetic data}\label{sssec:app:causality:synthetic}

We have simulated DAGs of random processes using dependent processes: we have specified a connectivity
matrix among processes whose weights are the probabilities of copying a string of length proportional to
the said probability from the designated process. We also have a probability of generating random data.

Let $p_n$ ($0\leq n< N$) be random stationary processes over $\alphabet$. The connectivity matrix
$M\in\mathcal{M}_{N,N+1}\left(\left[0,1\right]\right)$ is such that:
$$\forall 0\leq i< N,\;\;\sum_{j=0}^{N}M_{i,j}=1,$$
the last column of $M$ being the probability of generating random, uniformly distributed symbols. This
means that, for process $p_i$, the probability of copying data from any random point in the past of
process $p_j$ is $M_{i,j}$ ($j< N$). If $j=N$ then we generate random, uniformly distributed
data (the unobserved string that simulates the causal mechanism in the words of
\cite{janzing:causal:inference}). At each step, the length of the data being copied or generated is
proportional to $M_{i,j}$. In order to initiate data generation, we have a short ``burn-in'' period of
12 symbols. The minimum length of data that is to be copied from the past of a process is three symbols
(this is to ensure we can hook to the copied string, see footnote \ref{note:min3bytes}). As
seen from Fig.~(\ref{fig:causality:truth}), we are able to recover the structure of the DAGs. 

\begin{figure*}%
\centering
  \begin{minipage}[b]{.3\linewidth}
    \includegraphics[width=\textwidth]
        {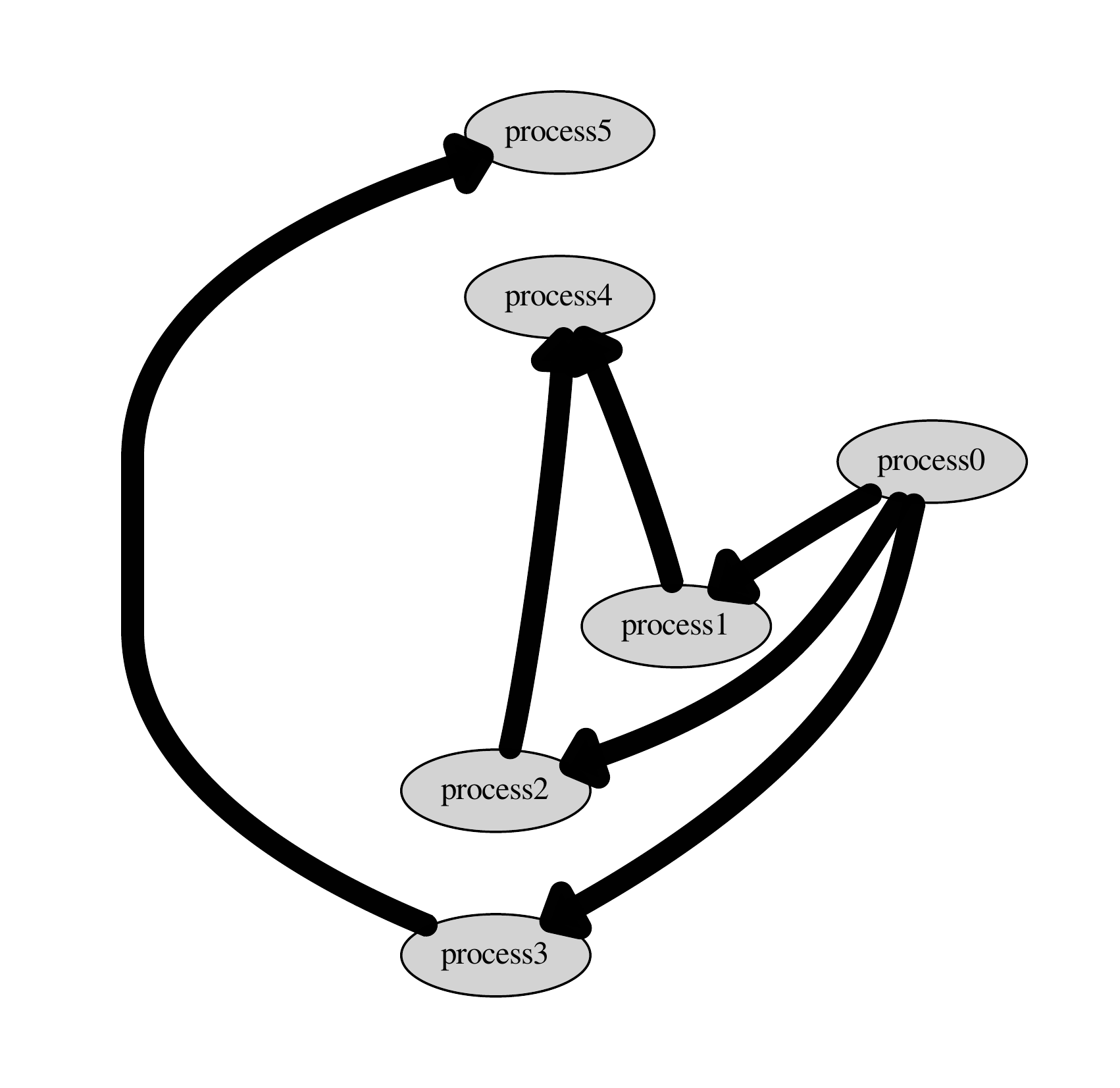}
  \end{minipage}%
  \hfil
  \begin{minipage}[b]{.3\linewidth}
    \includegraphics[width=\textwidth]
        {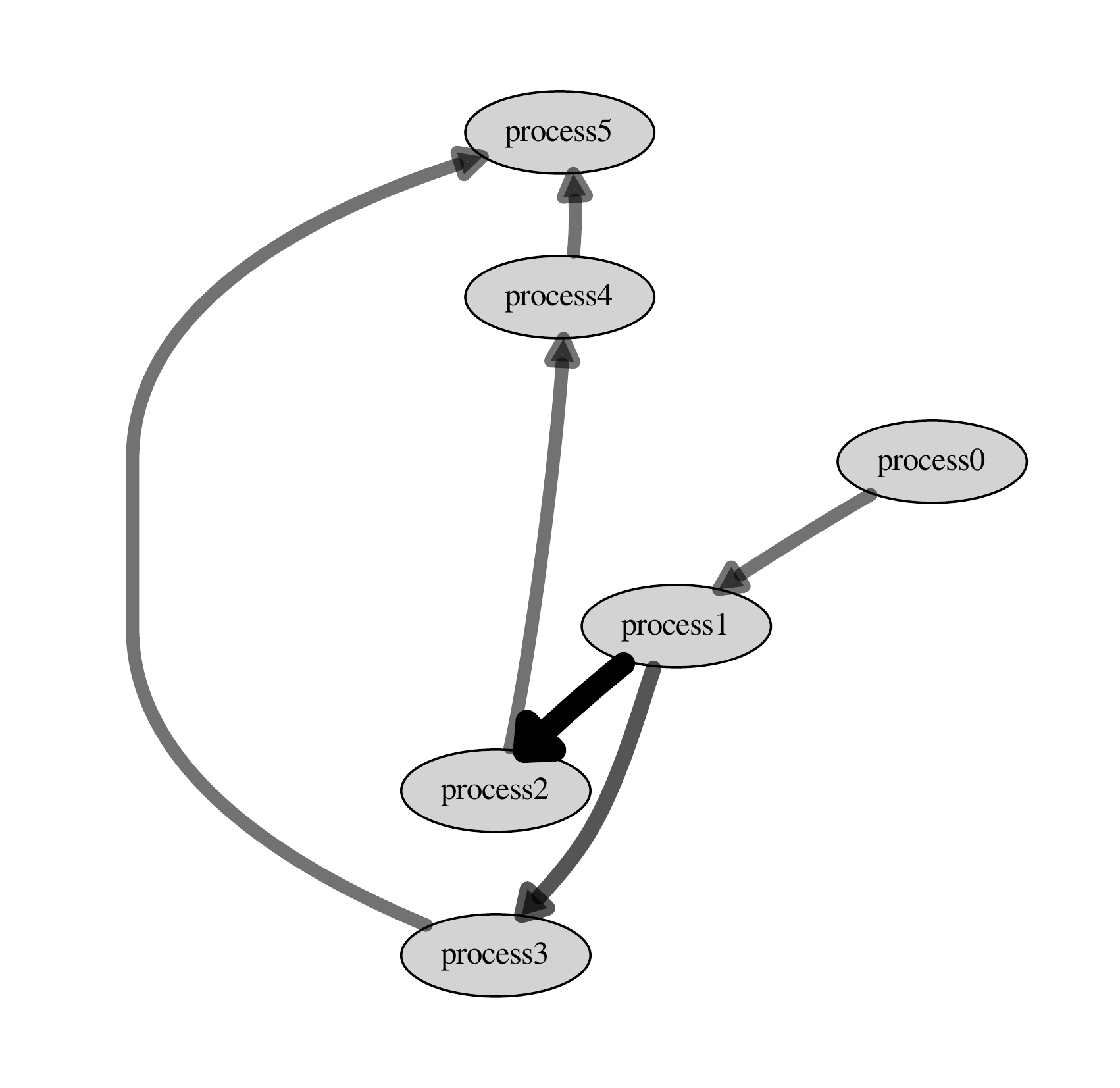}
  \end{minipage}
  \hfil
  \begin{minipage}[b]{.3\linewidth}
    \includegraphics[width=\textwidth]
        {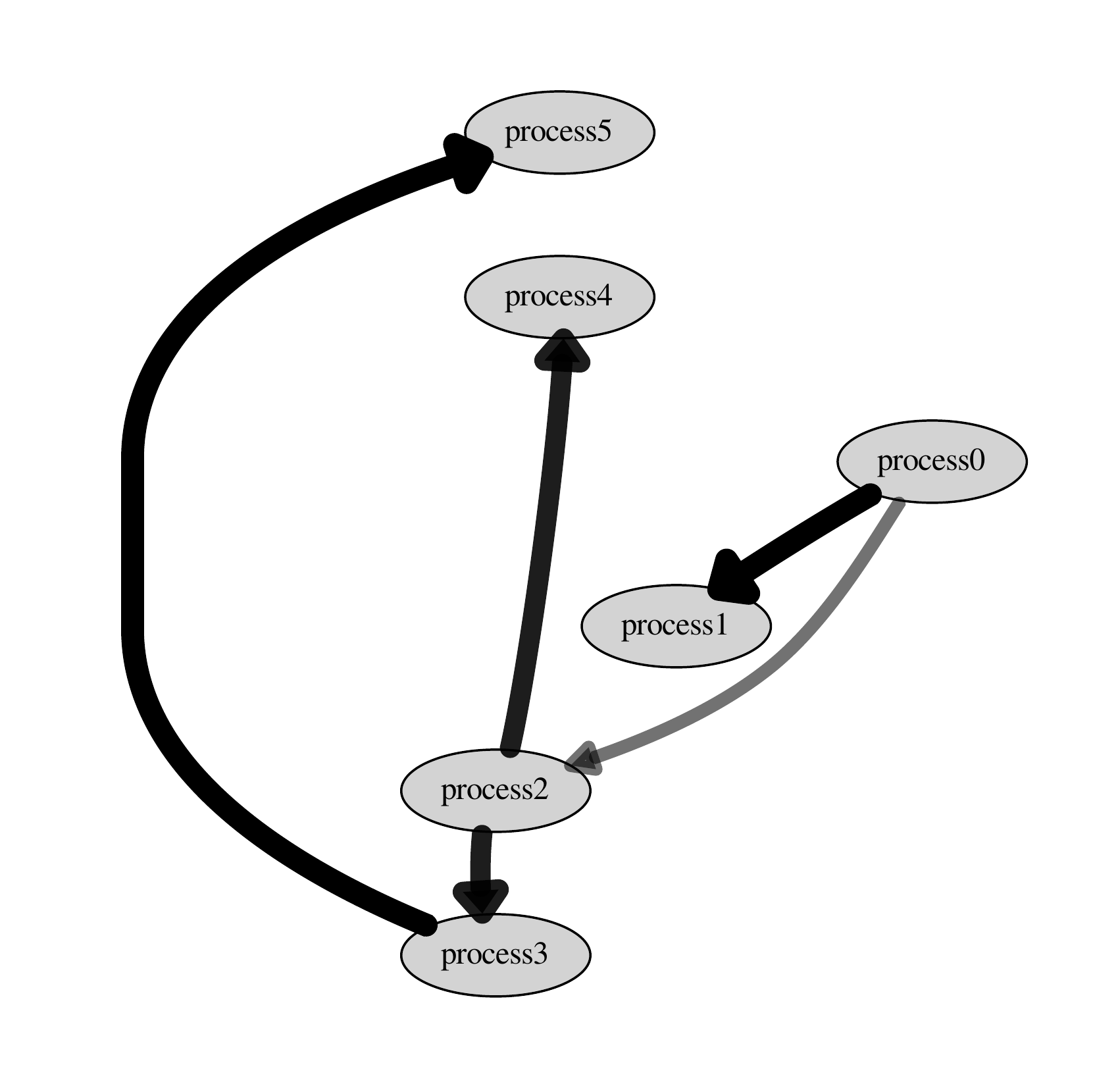}
  \end{minipage} \\
  
  \begin{minipage}[b]{.3\linewidth}
    \tiny
    \begin{tabular}{ll}
      $M=$ & $\left(
      \begin{array}{ccccccc}
        0   & 0   & 0   & 0   & 0 & 0 & 1.0 \\
        .50 & 0   & 0   & 0   & 0 & 0 & .50 \\
        .50 & 0   & 0   & 0   & 0 & 0 & .50 \\
        .50 & 0   & 0   & 0   & 0 & 0 & .50 \\
        0   & .50 & .50 & 0   & 0 & 0 & 0   \\
        0   & 0   & 0   & .50 & 0 & 0 & .50 
      \end{array}\right)
      $ \\
      \hline
      & $\left(
      \begin{array}{cccccc}
        0   & 0   & 0   & 0   & 0 & 0 \\
        .60 & 0   & 0   & 0   & 0 & 0 \\
        .60 & 0   & 0   & 0   & 0 & 0 \\
        .60 & 0   & 0   & 0   & 0 & 0 \\
        0   & .02 & .10 & 0   & 0 & 0 \\
        0   & 0   & 0   & .10 & 0 & 0 
      \end{array}\right)
      $
    \end{tabular}
  \end{minipage}%
  \hfil
  \begin{minipage}[b]{.3\linewidth}
    \tiny
    \begin{tabular}{ll}
      $M=$ & $\left(
      \begin{array}{ccccccc}
        0 0 & 0   & 0   & 0   & 0   & 0 & 1.0 \\
        .50 & 0   & 0   & 0   & 0   & 0 & .50 \\
        0 0 & .90 & 0   & 0   & 0   & 0 & .10 \\
        0 0 & .60 & 0   & 0   & 0   & 0 & .40 \\
        0 0 & 0   & .50 & 0   & 0   & 0 & .50 \\
        0 0 & 0   & 0   & .50 & .50 & 0 & 0 
      \end{array}\right)
      $ \\
      \hline
      & $\left(
      \begin{array}{cccccc}
        0   & 0   & 0   & 0   & 0   & 0 \\
        .60 & 0   & 0   & 0   & 0   & 0 \\
        0   & .10 & 0   & 0   & 0   & 0 \\
        0   & .40 & 0   & 0   & 0   & 0 \\
        0   & 0   & .03 & 0   & 0   & 0 \\
        0   & 0   & 0   & .10 & .04 & 0 
      \end{array}\right)
      $
    \end{tabular}
  \end{minipage}%
  \hfil
  \begin{minipage}[b]{.3\linewidth}
    \tiny
    \begin{tabular}{ll}
      $M=$ & $\left(
      \begin{array}{ccccccc}
        0   & 0 & 0   & 0   & 0 & 0 & 1.0 \\
        .90 & 0 & 0   & 0   & 0 & 0 & .10 \\
        .50 & 0 & 0   & 0   & 0 & 0 & .50 \\
        0   & 0 & .80 & 0   & 0 & 0 & .20 \\
        0   & 0 & .80 & 0   & 0 & 0 & .20 \\
        0   & 0 & 0   & .90 & 0 & 0 & .10 
      \end{array}\right)
      $ \\
      \hline
      & $\left(
      \begin{array}{cccccc}
        0   & 0 & 0   & 0   & 0 & 0 \\
        .50 & 0 & 0   & 0   & 0 & 0 \\
        .40 & 0 & 0   & 0   & 0 & 0 \\
        0   & 0 & .30 & 0   & 0 & 0 \\
        0   & 0 & .30 & 0   & 0 & 0 \\
        0   & 0 & 0   & .10 & 0 & 0 
      \end{array}\right)
      $
    \end{tabular}
  \end{minipage} \\

  \begin{minipage}[b]{.3\linewidth}
    \includegraphics[width=\textwidth]
        {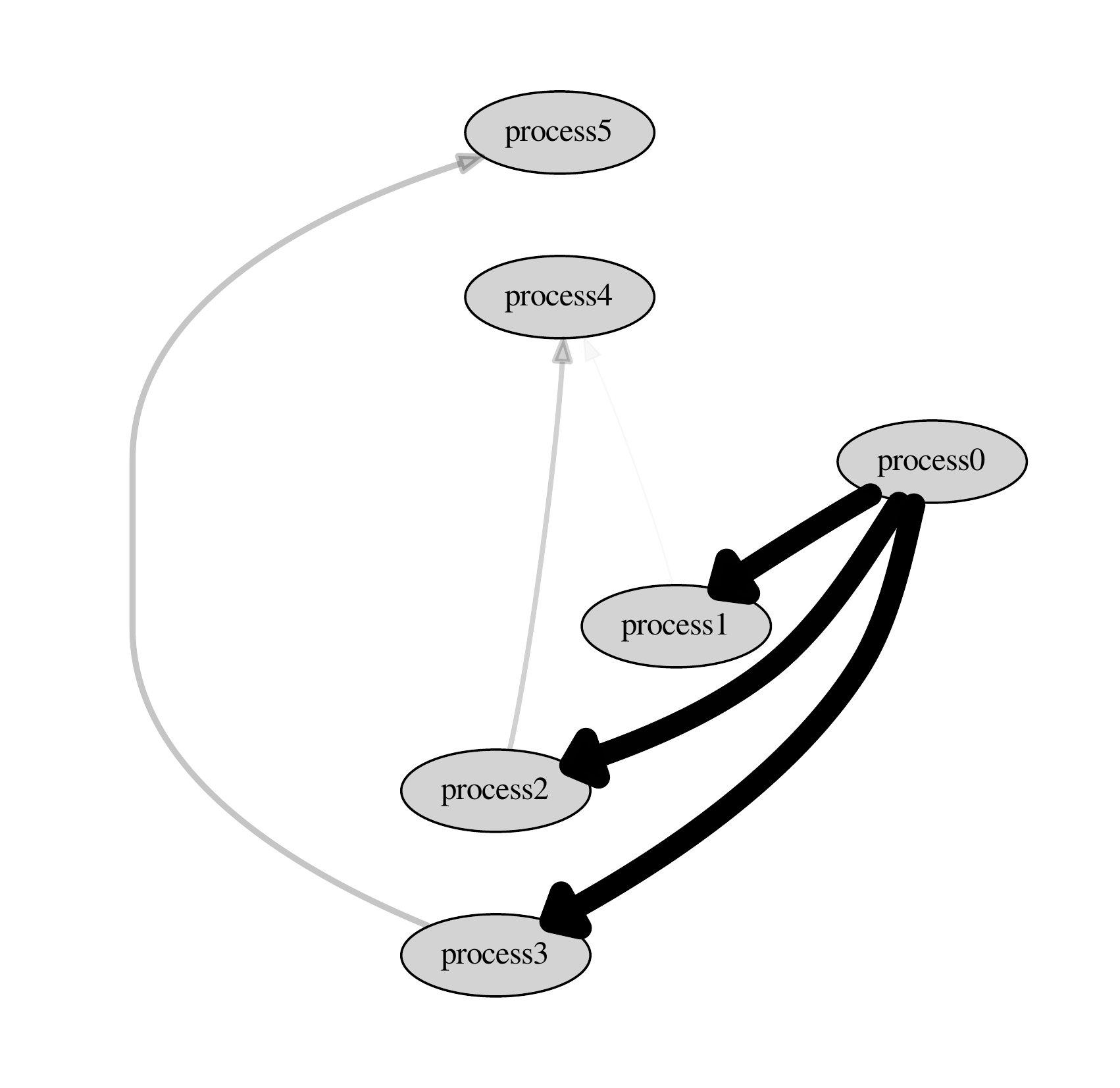}
  \end{minipage}%
  \hfil
  \begin{minipage}[b]{.3\linewidth}
    \includegraphics[width=\textwidth]
        {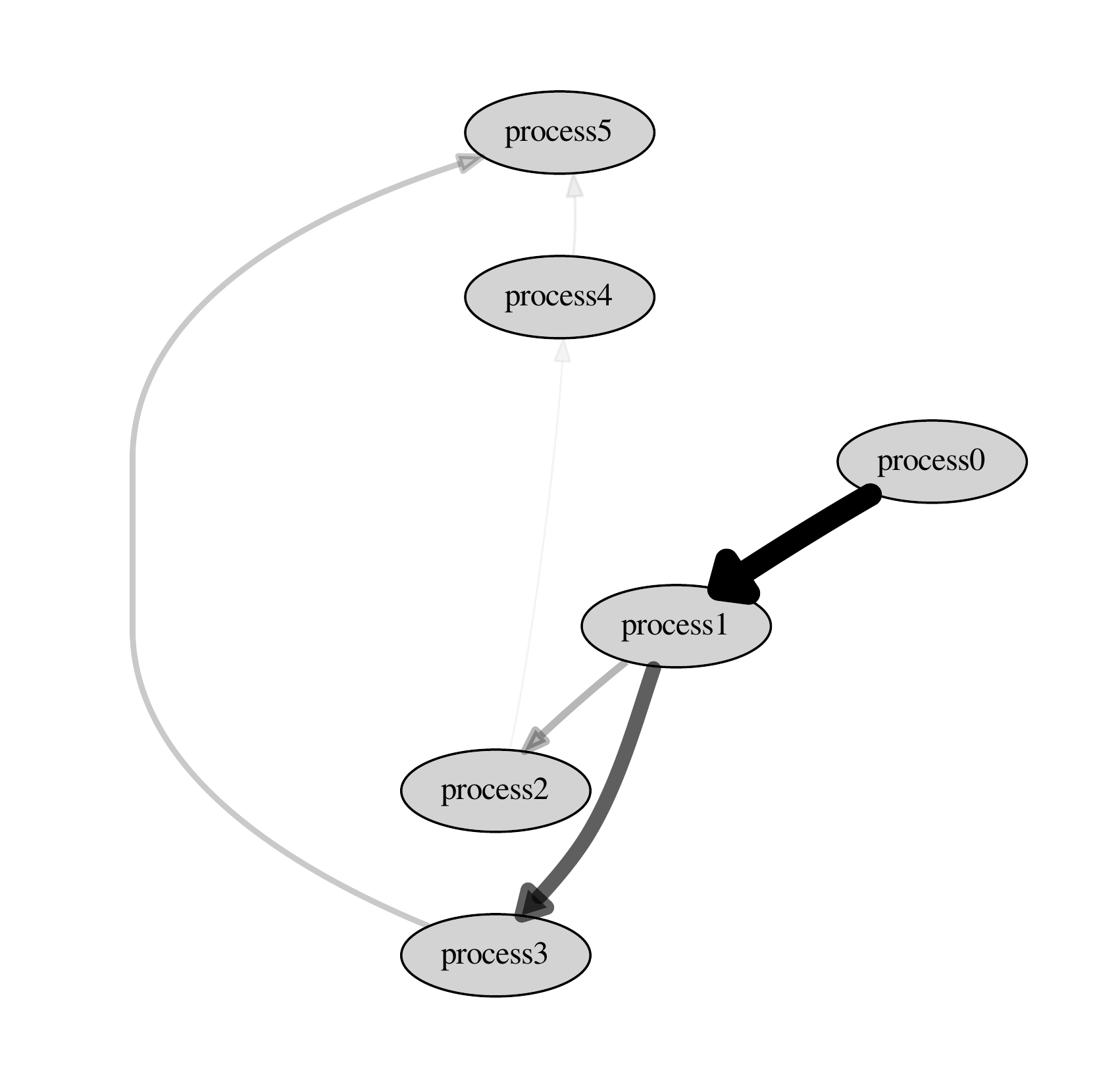}
  \end{minipage}
  \hfil
  \begin{minipage}[b]{.3\linewidth}
    \includegraphics[width=\textwidth]
        {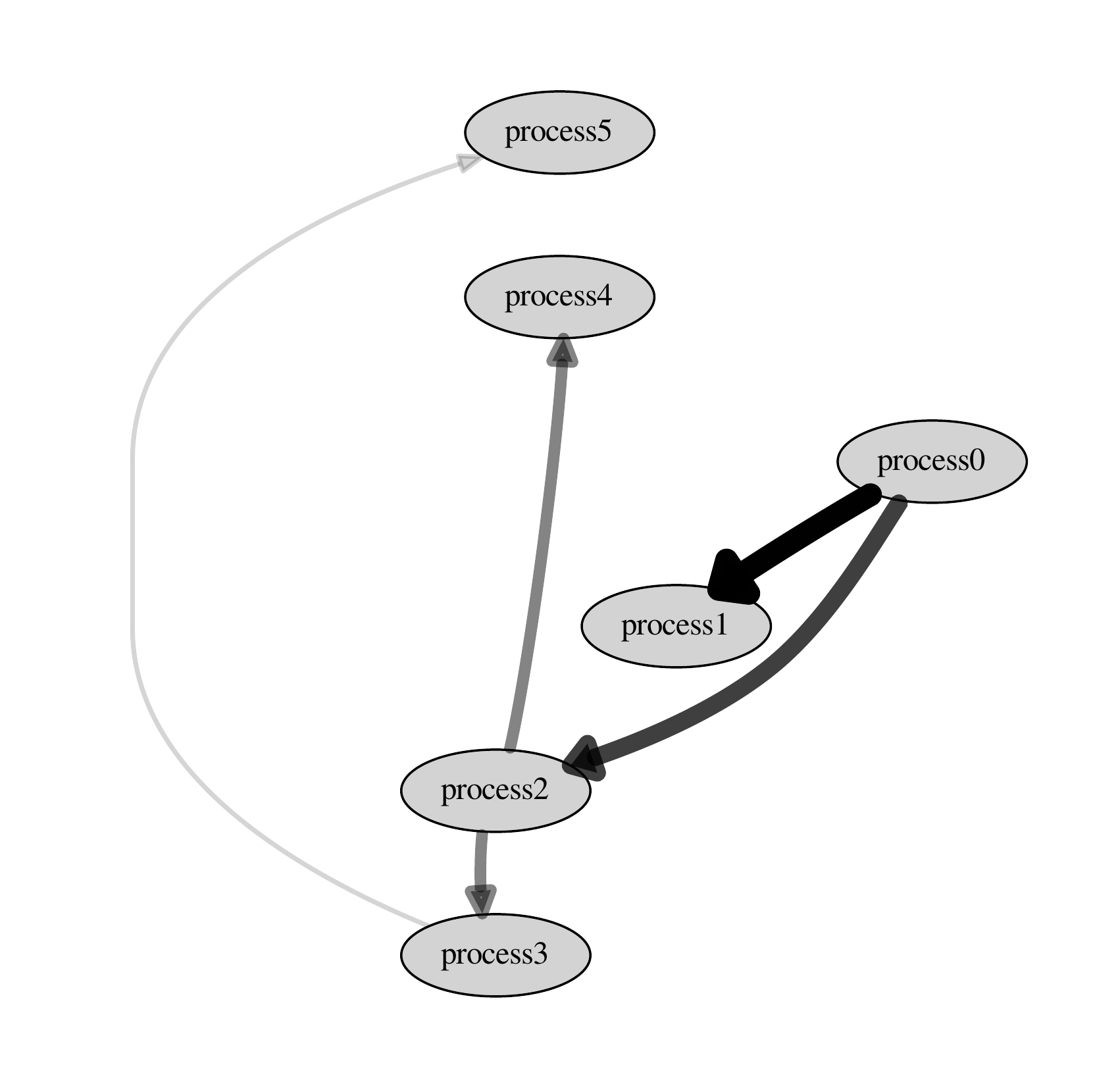}
  \end{minipage} \\
  
  \caption{Top: Three sample target DAGs we want to recover from one of their realizations along with
    their connectivity matrix: the last column is the probability of creating random strings -- this is
    the amount of innovation contained in the strings produced by the said process (and the source of
    randomness used to simulate innovation has not been represented for clarity).
    Bottom: Recovered connectivity matrices using $C_S$ along with their DAGs. We have filtered out
    directed information values such that $C_S(x_i\rightarrow x_j)<5e-3$. 
    Both the thickness and the transparency of the arrows are used to
    depict the amount of innovation created in one process and transmitted to another. This means that
    the structure of the DAG we are recovering is only dictated by the innovation created in every
    single process and transmitted to his neighbor(s). Some arrows may be faint to spot, but the
    recovered structure is correct in all cases. }
  \label{fig:causality:truth}
\end{figure*}

\subsubsection{An experiment in literature}\label{sssec:app:toussaint}

Jean-Philippe Toussaint is a famous Belgian author of French expression with a specific way of writing:
he works by producing paragraphs one after the other. Each paragraph gets typeset, annotated by hand,
typeset again, annotated again, and so on until the author is satisfied. Some of his paragraphs
culminate to more than 50 successive versions. In Fig.~(\ref{fig:reticence:scans}), we show the eight
successive versions of one of his paragraphs (Jean-Philippe Toussaint does not necessarily typeset
exactly the annotated version but makes changes in between). These versions are called fragments in
Fig.~(\ref{fig:reticence:scans}) and Fig.~(\ref{fig:reticence:results}). As one can see in
top and middle plots of Fig.~(\ref{fig:reticence:results}) (clustering using resp. Neighbor-Joining and
UPGMA), our semi-distance allows to correctly recover the chronological order of the fragments. In this
particular application, one would preferably use the UPGMA method because its results are more easily
read into by non specialists. 

On the lower graph of Fig.~(\ref{fig:reticence:results}), all the arrows have been kept in order to allow
in-depth inspection of the amount of differential innovation. This representation is certainly richer
as it allows to grasp the amount of information that has been reused from one fragment to another. 
All three results allow to correctly recover the order with which the fragments have been
actually written by Jean-Philippe Toussaint.

\begin{figure}%
  \centering
  \begin{minipage}[b]{\textwidth}
    \includegraphics[width=.45\textwidth]
                    {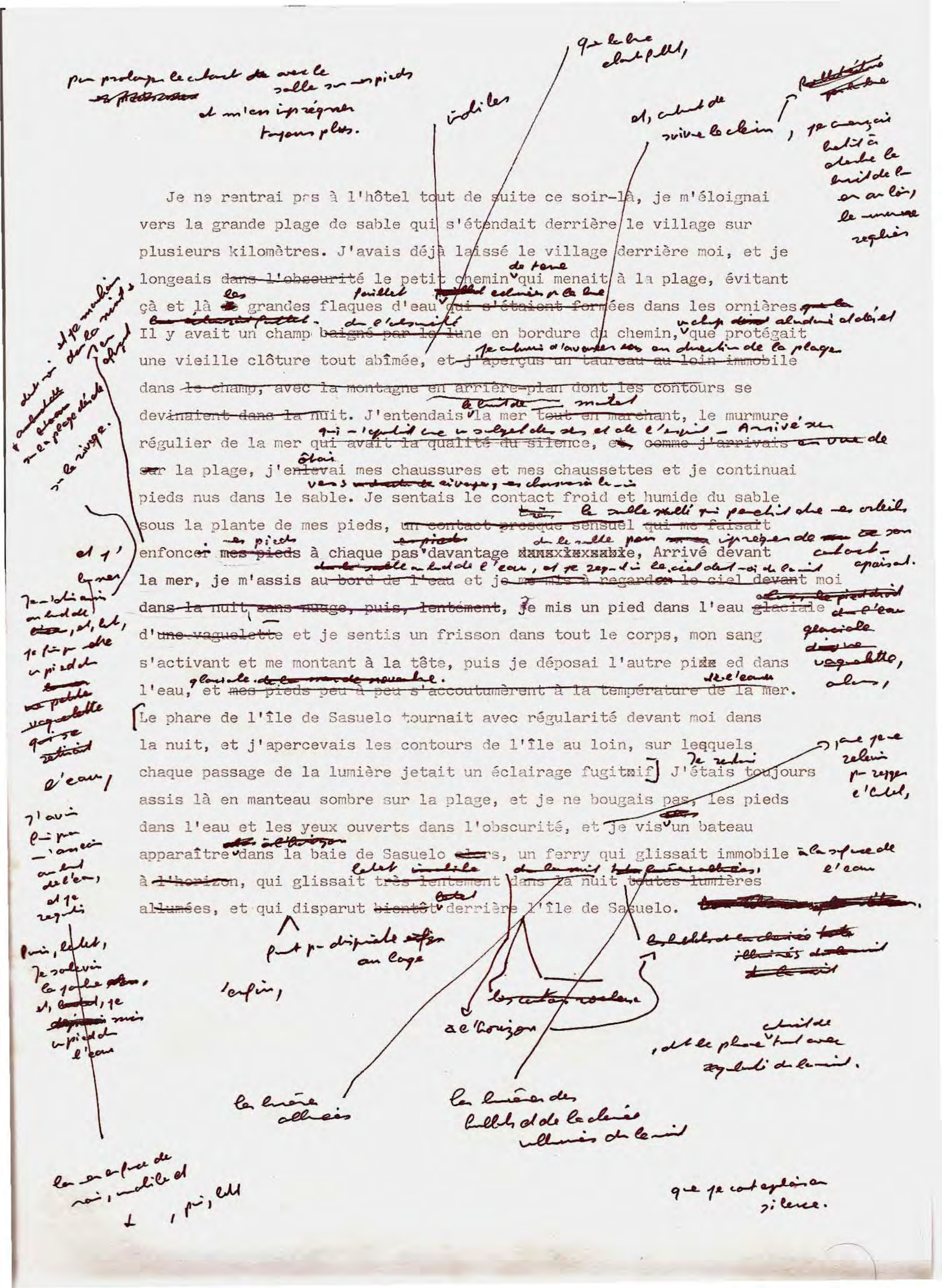}
    \includegraphics[width=.45\textwidth]
                    {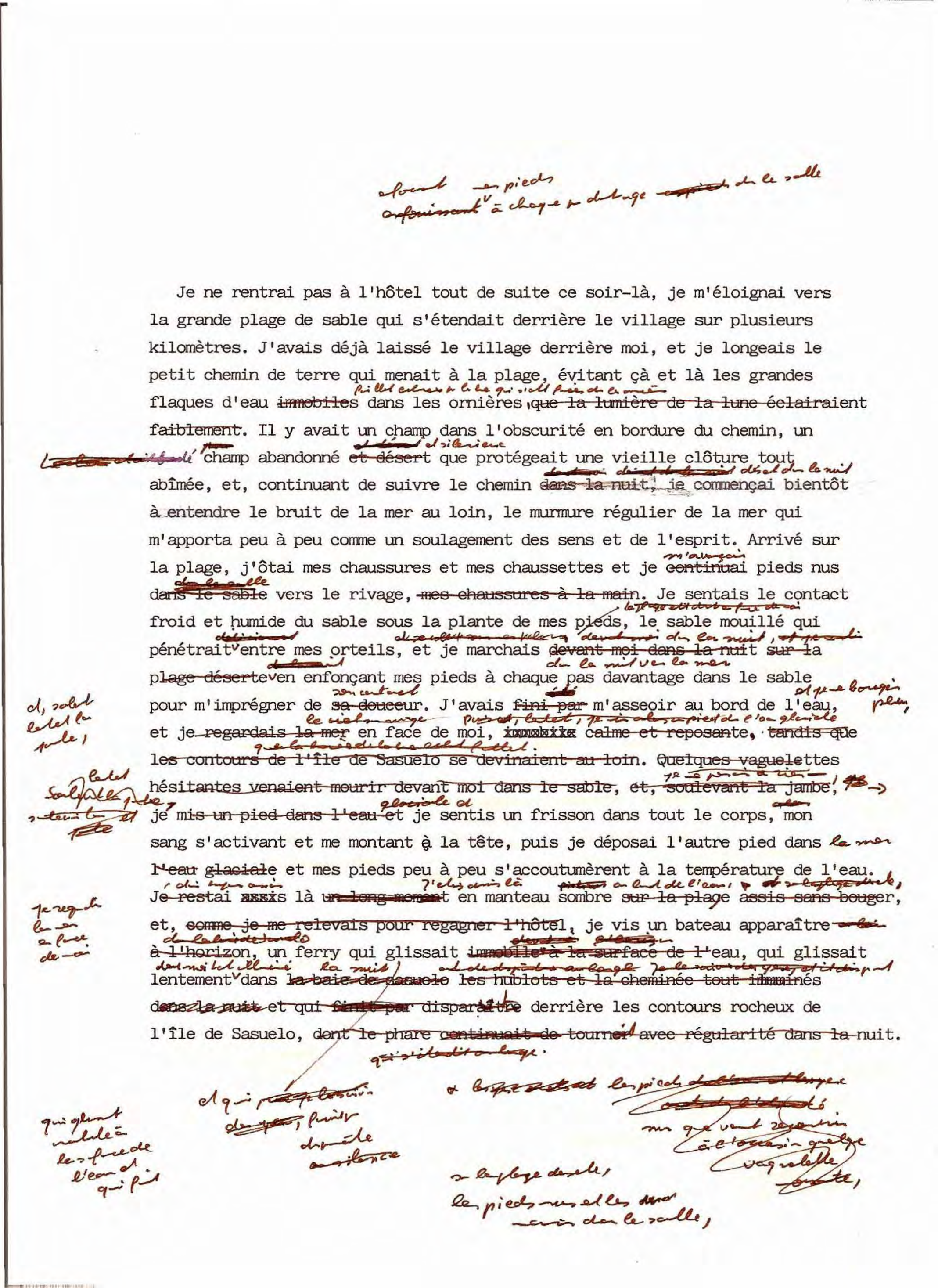}
  \end{minipage}%
  \\
  \begin{minipage}[b]{\textwidth}
    \includegraphics[width=.45\textwidth]
        {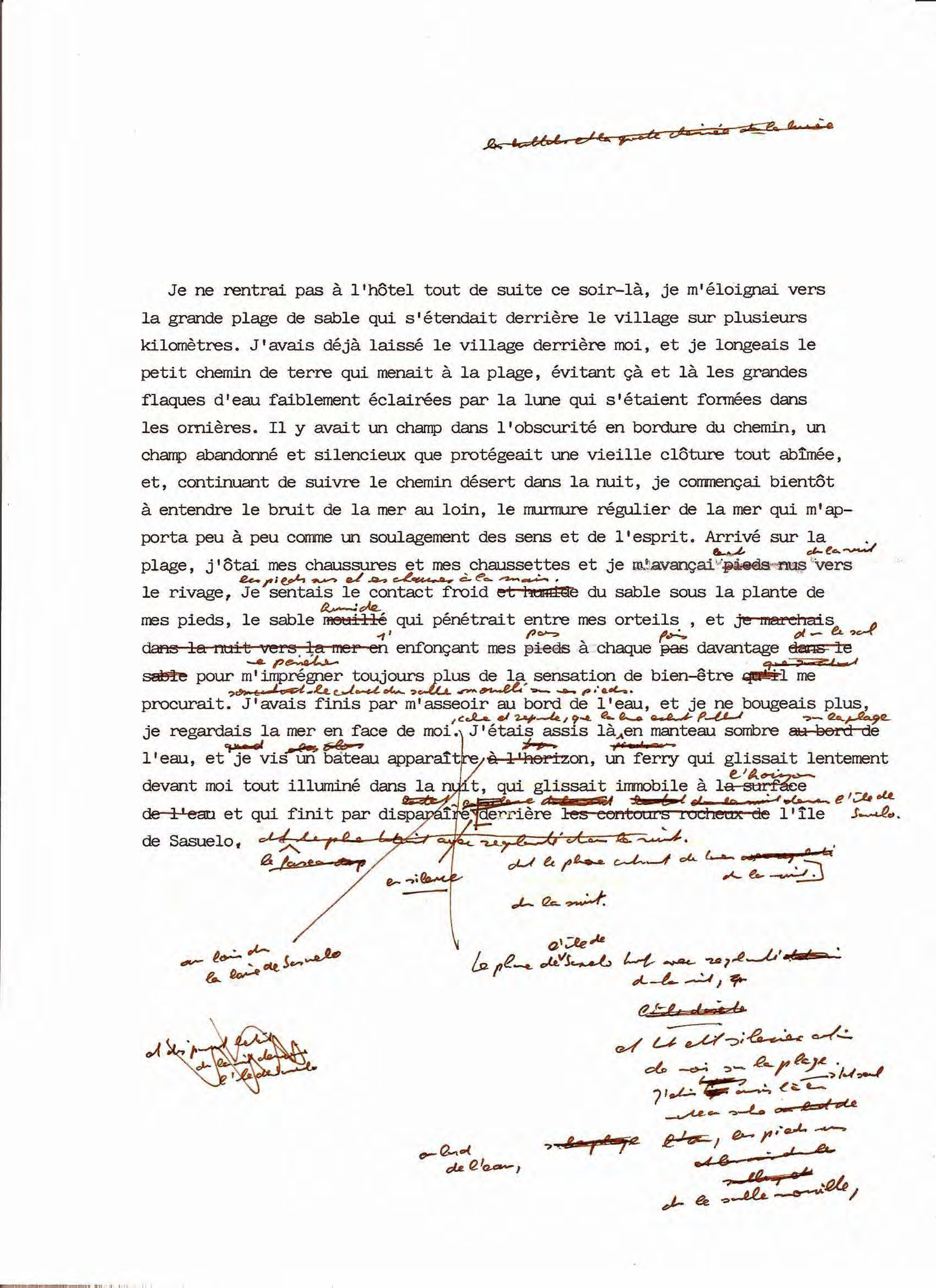}
    \includegraphics[width=.45\textwidth]
        {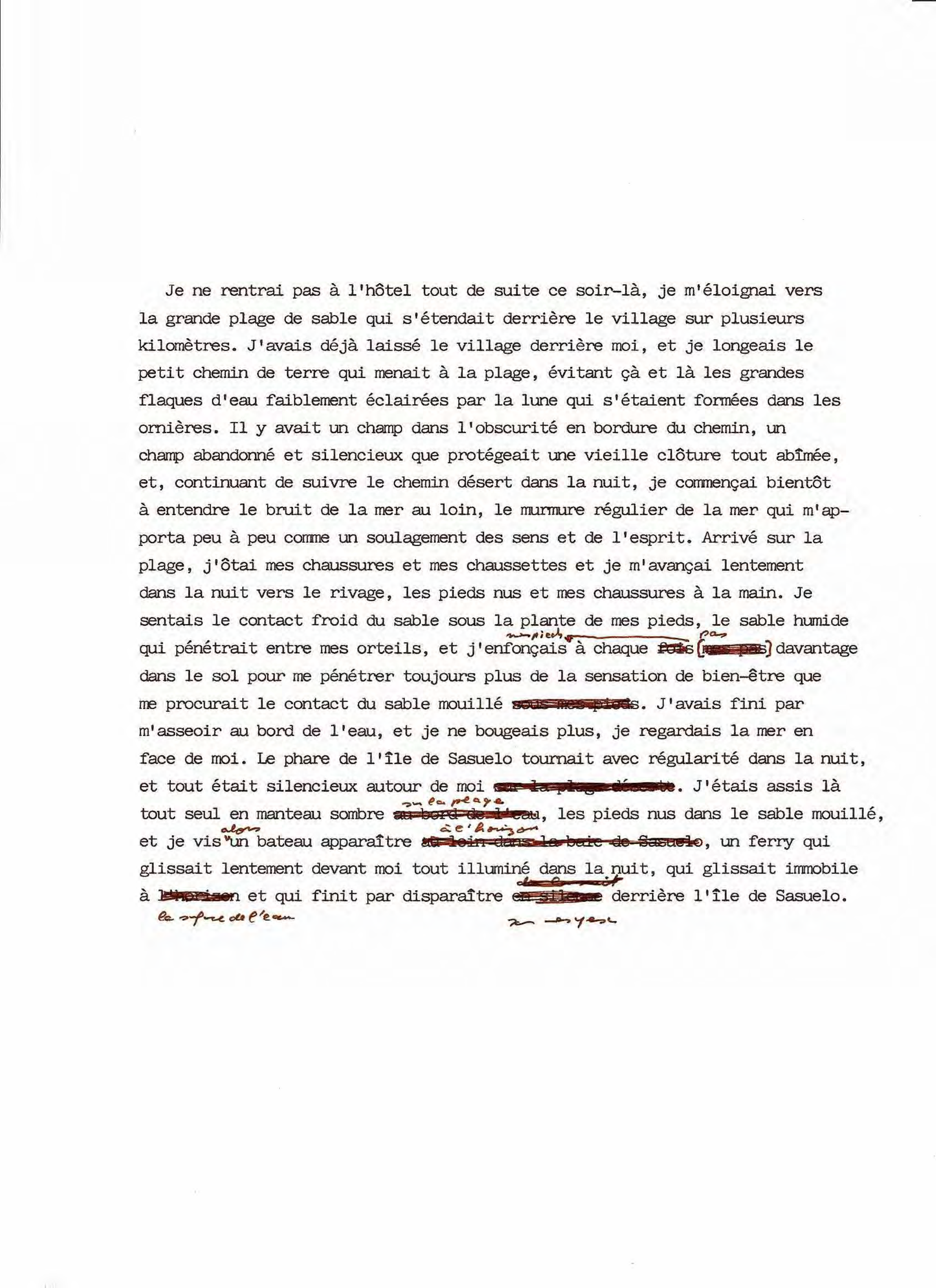}
  \end{minipage}
  
  \caption{Draft scans of the last pages of Jean-Philippe Toussaint's novel {\em La R\'eticence}
    \cite{toussaint:reticence}, by the author (freely available at
    \href{http://jptoussaint.com/documents/9/9b/La\_Réticence,\_brouillons\_dernières\_pages.pdf}
         {\texttt{jptoussaint.com}}). The transcripts we have used are labeled as follows:
         Fragments 1 and 2 are the typeset and annotated versions of the upper-left scan, fragments 
         3 and 4 are the typeset and annotated versions of the upper-right scan,
         fragments 5 and 6 are the typeset and annotated versions of the lower-right scan, and
         fragments 7 and 8 are the typeset and annotated versions of the lower-left scan. }
    \label{fig:reticence:scans}
\end{figure}

\begin{figure}[!ht]
\centering
\includegraphics[width=.5\linewidth]
        {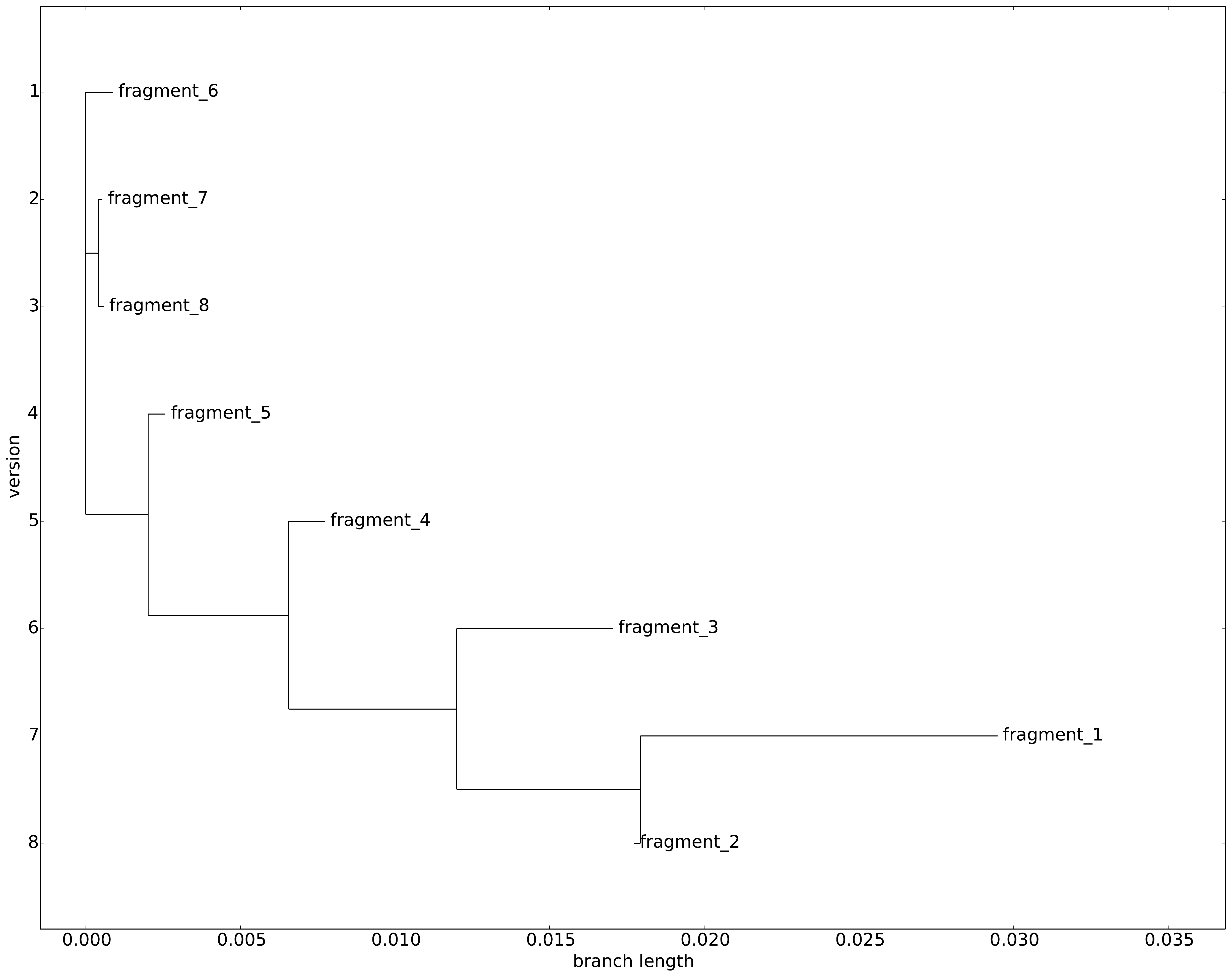} \\
\includegraphics[width=.5\linewidth]
        {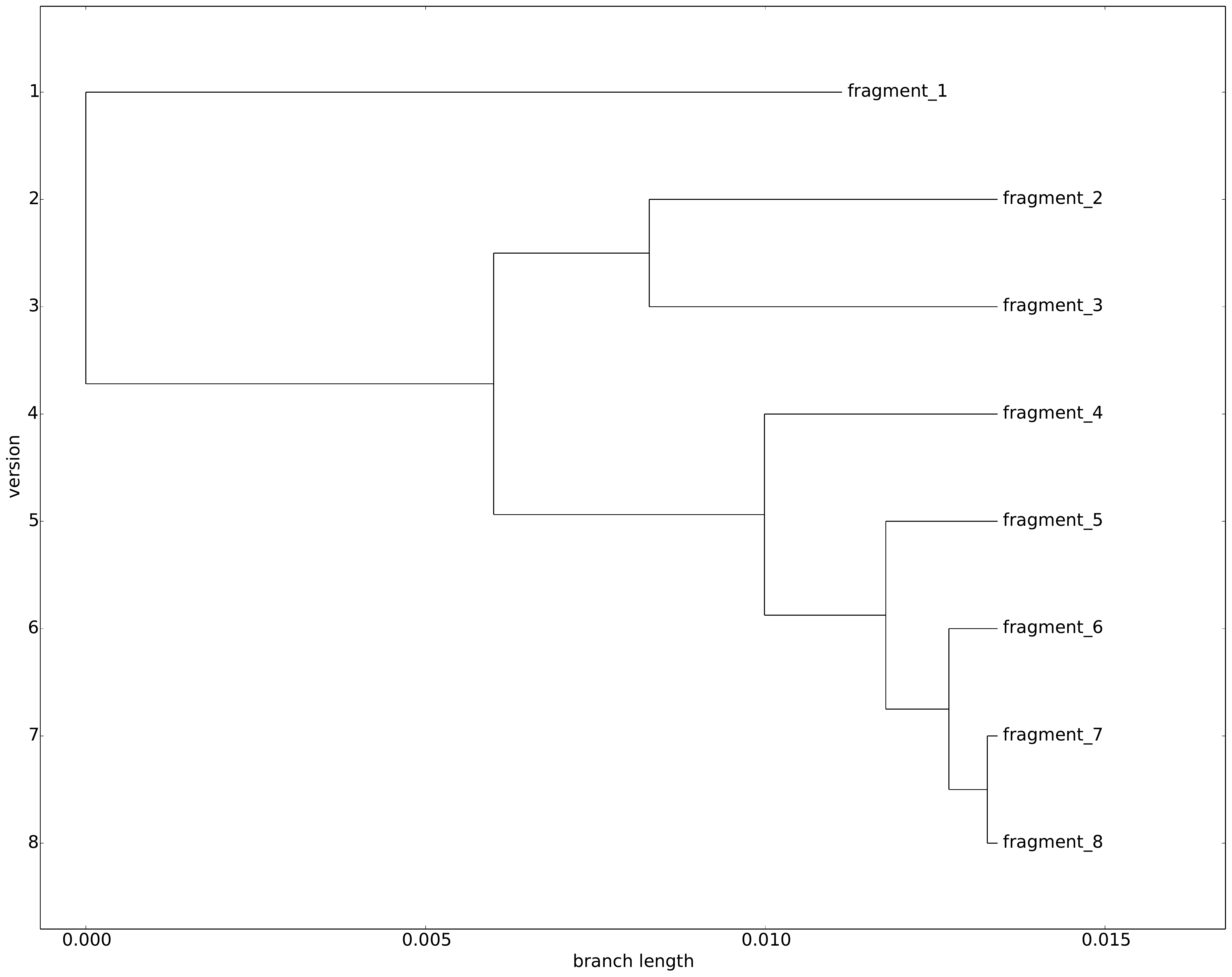} \\
\includegraphics[width=.5\linewidth, height=7cm]
        {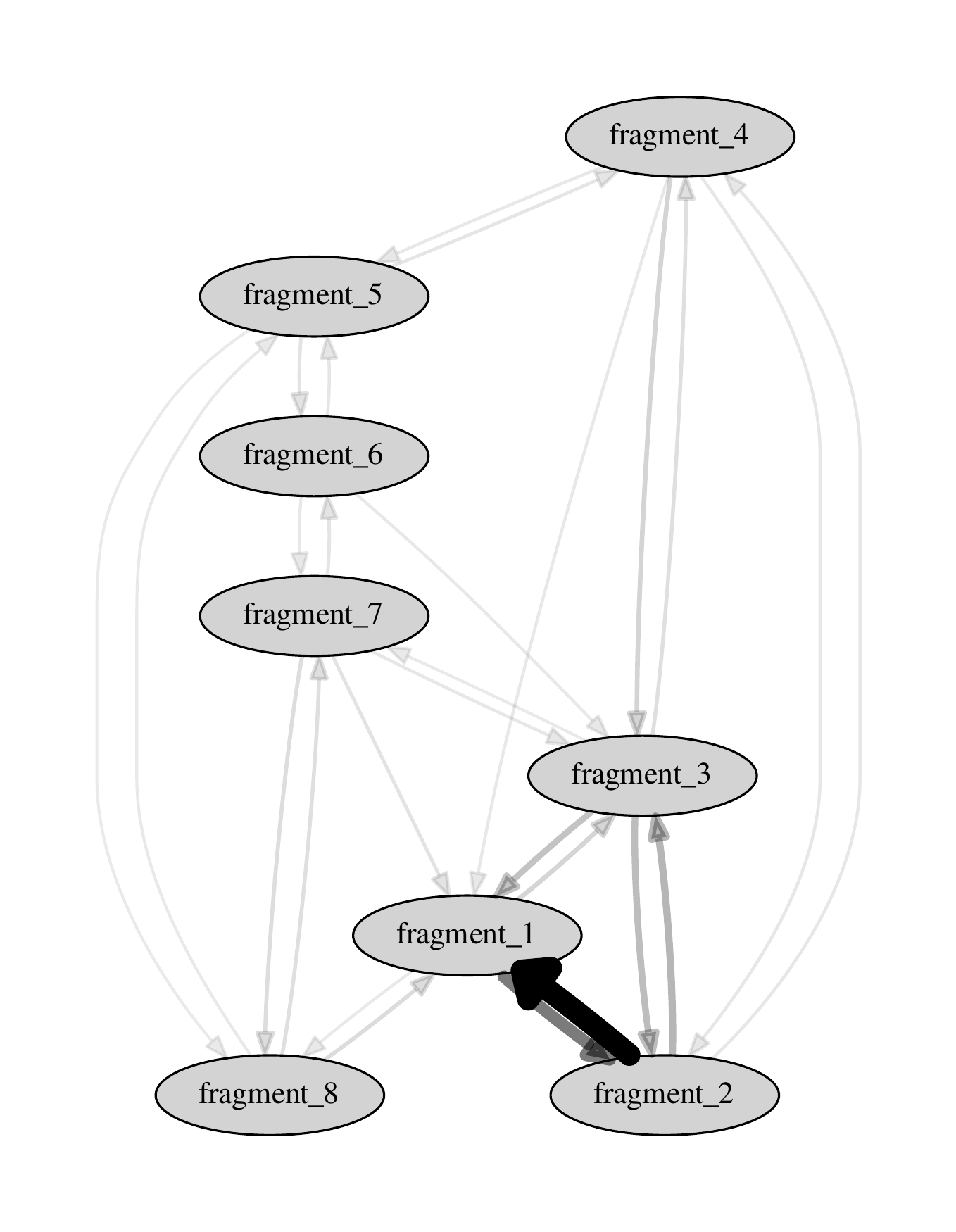}
        \caption{Top: The distance matrix of the data in Fig.~(\ref{fig:reticence:scans}) depicted using
          Neighbor-Joining \cite{saitou:neighbor:joining} clustering (for the sake of completeness). 
          Middle: The distance matrix depicted using UPGMA \cite{sokal:upgma} clustering. This is the
          method of choice one would use in this case. 
          Bottom: The inferred causality graph using $F_S$ (because the author may have -- actually, 
          has -- moved parts of the fragments later in the text).
        }
\label{fig:reticence:results}
\end{figure}

\section*{Acknowledgements}
We are indebted to Profs. Brigitte Combes and Thomas Lebarb\'e (Universit\'e Grenoble Alpes), heads of
the project {\em La R\'eticence} (supported by the \href{http://cahier.hypotheses.org}{CAHIER}
consortium, TGIR \href{http://www.huma-num.fr}{Huma-Num}) for providing the transcripts of
Jean-Philippe Toussaint's drafts. 

\ifCLASSOPTIONcaptionsoff
  \newpage
\fi

\end{document}